\theoremstyle{plain}
\newtheorem{theorem}{Theorem}
\newtheorem{lemma}[theorem]{Lemma}
\newtheorem{proposition}[theorem]{Proposition}
\newtheorem{corollary}[theorem]{Corollary}
\theoremstyle{definition}
\newtheorem{remark}[theorem]{Remark}
\newcommand{\beq}{\begin{equation*}}
\newcommand{\eeq}{\end{equation*}}
\DeclareMathOperator{\ev}{ev}
\DeclareMathOperator{\Sym}{Sym}
\DeclareMathOperator{\Tr}{Tr}
\DeclareMathOperator{\dmin}{d_{min}}
\DeclareMathOperator{\ddmin}{d_{min}^{\langle 2\rangle}}
\DeclareMathOperator{\djmin}{d_{min}^{\langle 1+q^j\rangle}}
\DeclareMathOperator{\dsmin}{d_{min}^{\langle 1+q^s\rangle}}
\DeclareMathOperator{\dtmin}{d_{min}^{\langle t\rangle}}
\DeclareMathOperator{\dumin}{d_{min}^{\langle t+1\rangle}}
\DeclareMathOperator{\drel}{\delta}
\DeclareMathOperator{\ddrel}{\delta^{\langle 2\rangle}}
\DeclareMathOperator{\dsrel}{\delta^{\langle 1+q^s\rangle}}
\DeclareMathOperator{\dtrel}{\delta^{\langle t\rangle}}
\DeclareMathOperator{\rate}{R}
\DeclareMathOperator{\rated}{R^{\langle 2\rangle}}
\DeclareMathOperator{\ratet}{R^{\langle t\rangle}}
\DeclareMathOperator{\dimt}{dim^{\langle t\rangle}}
\DeclareMathOperator{\dimu}{dim^{\langle t+1\rangle}}
\newcommand{\Fq}{{\mathbb{F}_q}}
\newcommand{\Fqr}{{\mathbb{F}_{q^r}}}
\newcommand{\Fqrs}{{\mathbb{F}_{q^{2s+1}}}}
\newcommand{\N}{\mathbb{N}}
\newcommand{\F}{\mathbb{F}}
\newcommand{\SqF}{S^2_{\Fq}}
\newcommand{\ie}{\emph{i.e. }}
\newcommand{\eg}{\emph{e.g. }}
\newcommand{\longto}{\longrightarrow}
\newcommand{\tens}{\otimes}
\newcommand{\moins}{\setminus}
\newcommand{\cA}{\mathcal{A}}
\newcommand{\cB}{\mathcal{B}}
\newcommand{\cF}{\mathcal{F}}
\renewcommand{\moins}{\setminus}
\def\epsilon{\varepsilon}
\begin{document}

\title{Asymptotically good binary linear codes with asymptotically good self-intersection spans}

\author{Hugues Randriambololona}

\maketitle

\begin{abstract}
If $C$ is a binary linear code, let $C^{\langle 2\rangle}$ be the linear
code spanned by intersections of pairs of codewords of $C$.
We construct an asymptotically good family of binary linear codes
such that, for $C$ ranging in this family, the $C^{\langle 2\rangle}$ also form
an asymptotically good family. For this we use algebraic-geometry codes,
concatenation, and a fair amount of bilinear algebra.

More precisely, the two main ingredients used in our construction are,
first, a description of the symmetric square of an odd degree extension
field in terms only of field operations of small degree,
and second, a recent result of Garcia-Stichtenoth-Bassa-Beelen
on the number of points of curves on such an odd degree extension field.
\end{abstract}

\section{Statement of result}
\label{Intro}

Let $q$ be a prime power,
and $\Fq$ the field with $q$ elements.
For any integer $n\geq1$, let $*$ denote coordinatewise multiplication
in the vector space $(\Fq)^n$, so
\beq
(x_1,\dots,x_n)*(y_1,\dots,y_n)=(x_1y_1,\dots,x_ny_n).
\eeq
For $C\subset(\Fq)^n$ a linear subspace,
\ie a $q$-ary linear code of length $n$,
let
\beq
C*C=\{c*c'\;|\;c,c'\in C\}\,\subset\,(\Fq)^n
\eeq
and let
\begin{equation}
\label{C<2>}
C^{\langle 2\rangle}=\langle C*C\rangle=\{\sum_{c,c'\in C}\alpha_{c,c'}c*c'\;|\;\alpha_{c,c'}\in\Fq\}
\end{equation}
be the \emph{linear span} of $C*C$.
In fact the set $C*C$ is stable under multiplication by scalars
(because $C$ is), so $C^{\langle 2\rangle}$ can equivalently be defined as
just the \emph{additive span} of $C*C$.

Remark that
the support of $c*c'$ is the intersection of the supports
of $c$ and $c'$. We then call $C^{\langle 2\rangle}$
the \emph{self-intersection span} of $C$.
We will be especially interested in the case $q=2$, where a codeword
can indeed be identified with its support, unambiguously.
Sometimes we will also call $C^{\langle 2\rangle}$ the ``square'' of $C$,
and more generally, higher ``powers'' $C^{\langle t\rangle}$ can be
defined analogously, for any $t\geq0$
(see section \ref{concatenation_et_bilinearite}).

Write $\rate(C)$ and $\drel(C)$ for the rate
and relative minimum distance
of $C$. As a shortcut, write also
$\rated(C)=\rate(C^{\langle 2\rangle})$
and $\ddrel(C)=\drel(C^{\langle 2\rangle})$.
It is easily seen that these functions satisfy:
\begin{equation}
\label{ineg<2>}
\rated\geq\rate\qquad\qquad\ddrel\leq\drel
\end{equation}
(see Proposition~\ref{monotonie} below;
for $q=2$ one even has the stronger result
that $C$ is a subcode of $C^{\langle 2\rangle}$,
since then $c*c=c$ for all $c$).


Recall that a family of codes $C_i$ of length going to infinity
is said \emph{asymptotically good} if both $\rate(C_i)$ and $\drel(C_i)$
admit a positive asymptotic lower bound.

\begin{theorem}
\label{th1}
For any prime power $q$ (\eg $q=2$), there exists an asymptotically good
family of $q$-ary linear codes $C_i$ whose
self-intersection spans $C_i^{\langle 2\rangle}$ also form an asymptotically
good family. 
\end{theorem}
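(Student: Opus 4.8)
The plan is to combine three ingredients in the by-now classical way: asymptotically good algebraic-geometry codes over a well-chosen extension field $\Fqr$ with $r$ odd, concatenation down to $\Fq$, and an explicit description of the symmetric square $S^2_{\Fq}(\Fqr)$ that tracks what happens to the $*$-product under the inner encoding. The point is that the square operation $C\mapsto C^{\langle 2\rangle}$ interacts with concatenation through the bilinear map $a\boxtimes b\mapsto\varphi(a)*\varphi(b)$ on $S^2_{\Fq}(\Fqr)$, and for $r$ odd this $S^2$ has a transparent structure that keeps everything algebraic-geometric and of controlled degree. Throughout we only need to produce \emph{positive} asymptotic lower bounds, and by $(\ref{ineg<2>})$/Proposition~\ref{monotonie} the rate of $C^{\langle2\rangle}$ comes for free from that of $C$, so the real work is a lower bound on $\ddrel$.

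First I would fix an odd integer $r=2s+1$ large enough that the Garcia--Stichtenoth--Bassa--Beelen bound gives $\lambda:=A(q^r)>1+q^s$ (that bound yields $A(q^{2s+1})\gtrsim\tfrac{2q}{q+1}\,q^s$, which beats $1+q^s$ once $s$ is large), and fix $\gamma$ with $1/\lambda<\gamma<1/(1+q^s)$. Taking the associated tower $X_i/\Fqr$ with $N_i:=\#X_i(\Fqr)\to\infty$ and $g_i/N_i\to1/\lambda$, I set $D_i=\sum_{P\in X_i(\Fqr)}P$ and $A_i=C_L(D_i,G_i)$ for a divisor $G_i$ of degree $\lfloor\gamma N_i\rfloor$ supported off the rational points. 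Then $\rate(A_i)\geq\gamma-(g_i-1)/N_i\to\gamma-1/\lambda>0$ and $\drel(A_i)\geq1-\gamma>0$, so $(A_i)$ is asymptotically good over $\Fqr$; moreover, products of functions in $L(G_i)$ lie in $L(2G_i)$, so $A_i^{\langle2\rangle}\subseteq C_L(D_i,2G_i)$, and $2\gamma<1$ keeps $\deg 2G_i<N_i$.

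Next I would choose, once and for all, a length $n$ and an $\Fq$-linear embedding $\varphi:\Fqr\hookrightarrow\Fq^n$ with image $B$ such that $B$ has positive relative distance, the symmetric bilinear map $\mu:S^2_{\Fq}(\Fqr)\to\Fq^n$, $a\boxtimes b\mapsto\varphi(a)*\varphi(b)$, is injective (equivalently $\dim B^{\langle2\rangle}=\dim S^2_{\Fq}(\Fqr)=\binom{r+1}{2}$), and $B^{\langle2\rangle}$ also has positive relative distance; such a $B$ exists for $n$ a large enough multiple of $\binom{r+1}{2}$ by a Gilbert--Varshamov-type genericity argument (or a suitable recursive construction). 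Let $C_i$ be the concatenation of $A_i$ with $B$ via $\varphi$. Block-diagonality and $\Fq$-linearity give $\rate(C_i)=\tfrac{r}{n}\rate(A_i)$ and $\drel(C_i)\geq\drel(A_i)\drel(B)$, both bounded below by positive constants, so $(C_i)$ is asymptotically good. For the square, $c*c'=(\mu(a_P\boxtimes a'_P))_P$ when $c=(\varphi(a_P))_P$, so applying $\mu$ in each of the $N_i$ blocks identifies $C_i^{\langle2\rangle}$ with the concatenation, via $\mu$, of the $S^2_{\Fq}(\Fqr)$-valued code $\widetilde{A_i^{\langle2\rangle}}$ spanned by the $(a_P\boxtimes a'_P)_P$; since $\mu$ is injective, $\ddrel(C_i)=\drel(C_i^{\langle2\rangle})\geq\drel(\widetilde{A_i^{\langle2\rangle}})\cdot\drel(B^{\langle2\rangle})$, and it remains to bound $\drel(\widetilde{A_i^{\langle2\rangle}})$ from below. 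This is where odd degree enters: since $r$ is odd, $\mathrm{Gal}(\Fqr/\Fq)$ has no element of order $2$, so under $\Fqr\otimes_{\Fq}\Fqr\cong\prod_{j\in\Z/r}\Fqr$ the swap involution pairs the $s$ orbits $\{j,-j\}$ with $j\neq0$ and fixes only $j=0$, giving an $\Fq$-algebra isomorphism $S^2_{\Fq}(\Fqr)\cong(\Fqr)^{s+1}$ whose coordinate maps are ordinary multiplication $a\boxtimes b\mapsto ab$ together with $s$ Frobenius-twisted symmetric products $a\boxtimes b\mapsto a\,b^{q^{i_k}}+a^{q^{i_k}}b$ with $i_k\in\{1,\dots,s\}$. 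Because $f,f'\in L(G_i)$ implies $f\cdot f'^{q^{i_k}}\in L((1+q^{i_k})G_i)$, each coordinate code of $\widetilde{A_i^{\langle2\rangle}}$ sits inside $C_L(D_i,(1+q^{i_k})G_i)$ (with $i_0=0$), whose nonzero words vanish at at most $(1+q^{i_k})\lfloor\gamma N_i\rfloor\leq(1+q^s)\gamma N_i<N_i$ of the $N_i$ points; hence any nonzero word of $\widetilde{A_i^{\langle2\rangle}}$, having a nonzero coordinate, has weight $\geq(1-(1+q^s)\gamma)N_i$, so $\drel(\widetilde{A_i^{\langle2\rangle}})\geq1-(1+q^s)\gamma>0$, and $(C_i^{\langle2\rangle})$ is asymptotically good.

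The main obstacle is the bilinear-algebra step: one must genuinely pin down the a priori opaque $\Fq$-algebra $S^2_{\Fq}(\Fqr)$ and its symmetric-product structure purely in terms of ordinary and Frobenius-twisted multiplication in $\Fqr$, and then recognize each resulting coordinate code as an algebraic-geometry code attached to a divisor of degree $\leq(1+q^s)\deg G_i$ — it is exactly this uniform degree control that keeps the square asymptotically good, and it is exactly here that the odd-degree hypothesis is essential, since in even degree a self-paired Frobenius power of order $2$ would contribute a twisted factor with no such clean description (notably in characteristic $2$), which is why the odd-extension version of the Garcia--Stichtenoth--Bassa--Beelen estimate, rather than a fixed small square field, is needed. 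A secondary but routine point is the construction of the fixed inner code $B$ realizing an injective $\mu$ with both $B$ and $B^{\langle2\rangle}$ of positive relative distance.
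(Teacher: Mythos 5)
Your plan is essentially the paper's: work over an odd-degree extension $\F_{q^{2s+1}}$, invoke the Garcia--Stichtenoth--Bassa--Beelen bound to get $A(q^{2s+1})>1+q^s$, decompose the symmetric square $\SqF\F_{q^{2s+1}}\cong(\F_{q^{2s+1}})^{s+1}$ via $m_0(x,y)=xy$ and the Frobenius-twisted products $m_j(x,y)=xy^{q^j}+x^{q^j}y$, concatenate an AG code with an inner code realizing this decomposition multiplicatively, and bound $\ddmin$ of the concatenation by pushing each block through the resulting isomorphism into the codes $C(D,(1+q^j)G)$. That is exactly the architecture of Sections~\ref{bilinearite_et_extensions}--\ref{fin}. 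Two places in your sketch deserve attention, though.

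First, your justification of the decomposition $\SqF\Fqr\cong(\Fqr)^{s+1}$ via the Galois splitting $\Fqr\tens_{\Fq}\Fqr\cong\prod_{j\in\Z/r}\Fqr$ and the swap involution is morally the right picture, but it is not quite a proof in characteristic $2$ (the case $q=2$, which is the whole point). The invariants $(\Fqr\tens\Fqr)^\sigma$ form a nice algebra of the right dimension $\binom{r+1}{2}$, but $\SqF\Fqr=\Sym(\Fqr;\Fq)^\vee$ is the space of \emph{co}invariants $\Fqr\tens\Fqr/\mathrm{Im}(1-\sigma)$; in characteristic $2$ one has $\mathrm{Im}(1-\sigma)\subset\ker(1-\sigma)$, so the natural map invariants $\to$ coinvariants is \emph{not} an isomorphism, and your ``coordinate maps'' do not come directly from projecting the invariant algebra. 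The maps $m_0,\dots,m_s$ do induce the desired isomorphism on the coinvariants, but this needs a separate check; the paper proves it directly (Proposition~\ref{deux_bases}) by the characteristic-free trace identity
$\Tr(ax)\Tr(ay)=\Tr(a^2xy)+\sum_{1\leq j\leq s}\Tr\bigl(a^{1+q^j}m_j(x,y)\bigr)$,
combined with a dimension count, which is cleaner and avoids the $\Gamma^2$ vs.\ $S^2$ issue entirely.

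Second, you defer the inner code $B$ to ``a Gilbert--Varshamov-type genericity argument'' and allow $n$ to be a large multiple of $\binom{r+1}{2}$. This is unnecessary: once $\mu$ is injective, the conditions $\drel(B)>0$ and $\drel(B^{\langle 2\rangle})>0$ are automatic (any nonzero fixed-length code has minimum distance $\geq1$), so the only real requirement is injectivity of $\mu$, and the paper satisfies it with the minimal possible length $n=\binom{r+1}{2}$ by an explicit construction: take $\phi=(\phi_1,\dots,\phi_{\binom{r+1}{2}})$ to be the linear forms $t_{\gamma_i}$ and $t_{\gamma_i+\gamma_j}$, and Lemma~\ref{lambda2_base} plus Lemma~\ref{dualite_trace} show that the $\phi_k^{\tens2}$ form a basis of $\Sym(\Fqr;\Fq)$, i.e.\ your $\mu=\overline{\Phi}$ is a bijection. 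This choice of inner code is exactly what makes Proposition~\ref{inegalite_dmin} a one-line block-weight argument, with no need for a positive-relative-distance hypothesis on $B$ or $B^{\langle2\rangle}$. (A further minor point: you evaluate at all $N_i$ rational points and support $G_i$ off them; the paper reserves one rational point $P_{i,0}$ as the support of $D_i$ and evaluates at the remaining $N_i-1$, which guarantees such a divisor exists without further argument.)
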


Keeping \eqref{ineg<2>} in mind, we can rephrase the theorem as
asking for $\epsilon,\epsilon'>0$
such that $\liminf_i\,\rate(C_i)\geq\epsilon$
and $\liminf_i\,\ddrel(C_i)\geq\epsilon'$.
Our proof will be constructive,
for example for $q=2$ we will give an explicit construction
with $\epsilon=1/651$ and $\epsilon'=1/1575$
(more generally all the parameter domain
$\epsilon\leq 0.001872-0.5294\,\epsilon'$
can be attained).

\medskip

Apparently the question of the existence of such codes
was first raised by G.~Z\'emor. The author's interest in it started
from a suggestion of C.~Xing. The generalization
to cubes of codes, or to arbitrarily high powers, is still open
(of course the case of real interest is $q=2$).

While study of the behavior of linear codes
under the operation $*$ is a
very natural problem and certainly deserves investigation for its own sake,
motivation comes as well from applications, such as the
analysis of bilinear algorithms \cite{LW}. There are also links
with secret-sharing and multi-party computation systems \cite{CCCX}\cite{CCX}.
More precisely, suppose given a symmetric $\Fq$-bilinear
map $B:V\times V\longto W$, where $V,W$ are
finite dimensional $\Fq$-vector spaces, as well as
a pair of $\Fq$-linear maps $\phi:V\longto(\Fq)^n$
and $\theta:(\Fq)^n\longto W$, such that the following diagram commutes:
\begin{equation}
\label{diagramme}
\begin{CD}
V\times V @>B>> W \\
@V{\phi\times\phi}VV @AA{\theta}A\\
(\F_q)^n\times(\F_q)^n @>*>> (\F_q)^n
\end{CD}
\end{equation}
that is, such that $B(u,v)=\theta(\phi(u)*\phi(v))$
for all $u,v\in V$.

From the point of view of algebraic complexity theory,
diagram~\eqref{diagramme} expresses how to compute $B$
using only $n$ two-variable multiplications in $\Fq$.
The two maps $\phi,\theta$ are then said to define a 
\emph{(symmetric) bilinear algorithm} of length $n$ for $B$. Of particular
interest is the case where $V=W=\Fqr$ is an extension field
of $\Fq$ and $B$ is usual field multiplication in it:
we refer the reader to
\cite{BP}\cite{CCXY}\cite{ChCh+}
for recent results on this topic.
On the other hand, from the point of view of secret-sharing
and multi-party computation, diagram~\eqref{diagramme} can
be interpreted as follows: elements $u,v\in V$ are split
into shares according to $\phi$ and distributed to $n$ remote
users, these users then multiply their shares locally, and
finally their local results are combined with $\theta$ to
get $B(u,v)$.
Several refinements can then be considered.

First, remark that given finitely many $u^{(i)},v^{(i)}\in V$,
a more general expression such as
\beq
\begin{split}
\sum_i B(u^{(i)},v^{(i)})&=\sum_i \theta(\phi(u^{(i)})*\phi(v^{(i)}))\\
&=\theta(\sum_i \phi(u^{(i)})*\phi(v^{(i)}))
\end{split}
\eeq
can be computed by applying $\theta$ only once, at the very end.
Moreover, letting $C\subset(\Fq)^n$ be the image of $\phi$, we see
that the sum $s=\sum_i \phi(u^{(i)})*\phi(v^{(i)})$ to which $\theta$
is applied at the end of the computation, describes a generic 
element of $C^{\langle 2\rangle}$.
Depending on the context, it could then be desirable 
that this computation be resistant to local alterations
of $s$ caused by noise, or by unreliable users. Also, in a scenario
\`a la threshold cryptography, an important feature will be the
ability to reconstruct $B(u,v)$ knowing only a certain given number
of coordinates of $s$. Clearly, all these properties will be controlled
by the minimum distance of $C^{\langle 2\rangle}$.

\section{Some ideas behind the proof}
\label{informal}

Here we discuss informally some ideas that lead to the proof
of Theorem~\ref{th1}.
Certainly this discussion reflects only the author's own experience
in dealing with this problem.
Since it is not logically necessary for the understanding of the proof,
the reader can skip it with no harm and go directly
to the next section (and maybe come back here later).

\medskip

There is a certain similarity between our object of interest
and the theory of linear intersecting codes \cite{CL}\cite{Miklos}.
Recall that a linear code $C$ is said intersecting
if $c*c'$ is non-zero for all non-zero $c,c'\in C$
(and this could be refined by requiring $c*c'$ to have at least
a certain prescribed weight).
Although none of these notions imply the other, it turns out
that methods used to produce intersecting codes often produce
codes having a good $\ddrel$. This is often the case, for example,
for intersecting codes constructed as evaluation codes
(see \cite{21sep}\cite{Xing2002} for more on this topic,
although actually
the codes constructed there do not have a good $\ddrel$).

Suppose we are given an algebra $\cF$ of functions, admitting a nice
notion of ``degree'', and which can be evaluated
at a certain set of points $X$. We then define a linear
code $C_D$ as the image of the space $\cF(D)$ of functions
of degree at most $D$ under this evaluation map.
For example, $\cF$ could be the algebra of polynomials
in one or several indeterminates over a finite field, giving rise
to Reed-Solomon or Reed-Muller codes. Or $\cF$ could be the function
field of an algebraic curve, giving rise to Goppa's algebraic-geometry
codes. In all these situations, bounds on the parameters of $C_D$
can be deduced from $D$ and the cardinality of $X$.
Now for $f,f'\in\cF(D)$ we have $ff'\in\cF(2D)$, 
which implies $c*c'\in C_{2D}$ for all $c,c'\in C_D$.
Applying the aforementioned bounds to $C_{2D}$, 
we find that $C_D$ is intersecting
provided $D$ is suitably chosen. But in fact, by linearity,
the argument just above gives
the stronger result $C_D^{\langle 2\rangle}\subset C_{2D}$,
from which the lower bound $\ddrel(C_D)\geq\drel(C_{2D})$
follows.

Remark then that to have a lower bound on $\rate(C_D)$
requires in general $D$ to be large, while a lower bound on $\drel(C_{2D})$
requires $2D$ to be small with respect to the cardinality of $X$.
When the size $q$ of the field is big enough,
these two conditions are compatible: for example,
algebraic-geometry codes verifying the hypotheses in Theorem~\ref{th1}
can be constructed as soon as the Ihara constant satisfies $A(q)>2$
(see sections \ref{AG} and \ref{fin}).
Unfortunately, with the present techniques,
if $q$ is too small,
these two requirements become contradictory when one lets the
length of the codes go to infinity.
A standard solution in such a situation
is to work first over an extension field, and then conclude
with a concatenation argument.
If one is interested only in constructing intersecting codes,
this works easily \cite{21sep}
because a concatenation of intersecting codes is intersecting.
But in the problem we study, things do not behave so nicely:
in general it appears very difficult to derive a lower bound on the
$\ddrel$ of a concatenated code from the parameters of its inner
and outer codes. Perhaps this is best illustrated as follows.

Let $\Fqr$ be an extension of $\Fq$, and let
$\phi:\Fqr\longto(\Fq)^l$ and $\theta:(\Fq)^l\longto\Fqr$
define a multiplication algorithm as discussed in the previous section,
so $xy=\theta(\phi(x)*\phi(y))$ for all $x,y\in\Fqr$.
A very tempting approach when trying to prove Theorem~\ref{th1}
is then to concatenate codes $C$ having asymptotically good squares over
an extension field $\Fqr$, with $\phi$. For if $\phi(C)$ denotes
the concatenated code, it is easily seen that $\theta$ maps
$\phi(C)^{\langle 2\rangle}$ in $C^{\langle 2\rangle}$, hence one could
hope to use this ``reconstruction map'' to derive a lower bound
on the minimum distance of $\phi(C)^{\langle 2\rangle}$ from that
of $C^{\langle 2\rangle}$. More precisely, if $c\in\phi(C)^{\langle 2\rangle}$
has weight less than $d=\dmin(C^{\langle 2\rangle})$, then a fortiori
$c$ has less than $d$ non-zero block symbols over $(\Fq)^l$,
so $\theta(c)\in C^{\langle 2\rangle}$ has weight less than $d$,
hence $\theta(c)=0$. If $\theta$ were injective, we could deduce
that $c=0$. Unfortunately, for $r>1$ it turns out that $\theta$ is never
injective, and all we get is that the block symbols of $c$ all live
in $\ker(\theta)$.
So this ``naive approach'' fails, but not by much: the obstruction
is the kernel of $\theta$.

We fix this as follows. In section \ref{bilinearite_et_extensions}
we define higher ``twisted multiplication laws'' $m_j$ on $\Fqr$,
and we put them together in a map
$\Psi:\Fqr\times\Fqr\longto W$,
where $W=(\Fqr)^{\lceil\frac{r}{2}\rceil}$ if $r$ is odd
(and $W=(\Fqr)^{\frac{r}{2}}\times\mathbb{F}_{q^{\frac{r}{2}}}$ if $r$ is even),
so that:
\begin{itemize}
\item over $\Fq$, $\Psi$ is symmetric bilinear
\item over $\Fqr$, $\Psi$ is a polynomial map of degree $1+q^{\lfloor\frac{r}{2}\rfloor}$.
\end{itemize}
We then construct a bilinear algorithm
\begin{equation}
\label{diagramme2}
\begin{CD}
\Fqr\times\Fqr @>{\Psi}>> W \\
@V{\phi\times\phi}VV @AA{\theta}A\\
(\F_q)^{\frac{r(r+1)}{2}}\times(\F_q)^{\frac{r(r+1)}{2}} @>*>> (\F_q)^{\frac{r(r+1)}{2}}
\end{CD}
\end{equation}
with the property that $\theta$ is \emph{bijective}.
The key steps in proving the bijectivity of $\theta$ are:
\begin{itemize}
\item identify the lower right $(\F_q)^{\frac{r(r+1)}{2}}$ in \eqref{diagramme2}
with the symmetric square $\SqF\Fqr$, that is, with the space through
which any symmetric $\Fq$-bilinear map on $\Fqr$ factorizes uniquely
\item remark that any symmetric $\Fq$-bilinear map on $\Fqr$ can be expressed
uniquely in terms of the $m_j$ for $0\leq j\leq\lfloor\frac{r}{2}\rfloor$,
and $\Fq$-linear operations.
\end{itemize}
We can then concatenate with $\phi$ as in the naive approach above.
In appropriate bases, the matrix of $\phi$, that is, the generating
matrix of the inner code, is made of all $\{0,1\}$ columns of weight $1$ or $2$.
For example, for $r=4$, it would look like
\beq
G_\phi=\left(\begin{array}{cccccccccc}
1 & 0 & 0 & 0 & 1 & 1 & 1 & 0 & 0 & 0 \\
0 & 1 & 0 & 0 & 1 & 0 & 0 & 1 & 1 & 0 \\
0 & 0 & 1 & 0 & 0 & 1 & 0 & 1 & 0 & 1 \\
0 & 0 & 0 & 1 & 0 & 0 & 1 & 0 & 1 & 1
\end{array}\right)
\eeq
although actually (for $q=2$) we will take $r=9$.

Now $\theta$ has no kernel, so we can derive a lower bound
on the minimum distance of the
squared concatenated code $\phi(C)^{\langle 2\rangle}$
by the very same argument as sketched above. This is done
in section~\ref{concatenation_et_bilinearite}.
However there is then an added difficulty:
since $\Psi$ has degree $1+q^{\lfloor\frac{r}{2}\rfloor}$, this bound
will not be in terms of the minimum distance of the square of
the outer code $C$ only,
but also that of its higher powers up to order $1+q^{\lfloor\frac{r}{2}\rfloor}$.

So to conclude
(sections \ref{AG} and \ref{fin})
we need codes over $\Fqr$ whose powers
up to order $1+q^{\lfloor\frac{r}{2}\rfloor}$ are asymptotically good.
On the other hand, algebraic geometry provides codes over $\Fqr$ whose powers
up to order $\lceil A(q^r)\rceil-1$ are asymptotically good.
When $r$ is even, this is not enough: because there we have
$\lfloor\frac{r}{2}\rfloor=\frac{r}{2}$ and we face the
Drinfeld-Vladut bound \cite{DV} $A(q^r)\leq q^{\frac{r}{2}}-1$.
However, when $r$ is odd, we have $\lfloor\frac{r}{2}\rfloor=\frac{r-1}{2}$,
which leaves us just enough room under the Drinfeld-Vladut bound
to make use of a recent
construction \cite{GSBB} of Garcia-Stichtenoth-Bassa-Beelen,
that provides us with curves sufficiently close to it
(although not attaining it) to meet our needs.

\section{Bilinear study of field extensions}
\label{bilinearite_et_extensions}

Let $V$ be a vector space of dimension $r$ over $\Fq$,
and let $V^\vee$ be its dual vector space.
Let also $\Sym(V;\Fq)$ be the space of symmetric bilinear forms on $V$.
If $\lambda\in V^\vee$ is a linear form on $V$,
we can define
\beq
\begin{array}{cccc}
\lambda^{\tens 2}: & V\times V & \longto & \Fq\\
& (u,v) & \mapsto & \lambda(u)\lambda(v)
\end{array}
\eeq
which is a symmetric bilinear form on $V$.

\begin{lemma}
\label{lambda2_base}
Let $\lambda_1,\dots,\lambda_r$ be a basis of $V^\vee$.
Then the $\frac{r(r+1)}{2}$ elements
$\lambda_i^{\tens 2}$ for $1\leq i\leq r$ and $(\lambda_i+\lambda_j)^{\tens 2}$
for $1\leq i<j\leq r$ form a basis of $\Sym(V;\Fq)$.
\end{lemma}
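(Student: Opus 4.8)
The plan is to show the $\frac{r(r+1)}{2}$ forms in question are linearly independent; since $\dim\Sym(V;\Fq)=\frac{r(r+1)}{2}$, spanning follows automatically. The natural coordinates to work in are the entries of the Gram matrix: fixing the basis of $V$ dual to $\lambda_1,\dots,\lambda_r$, a symmetric bilinear form $B$ corresponds to the symmetric matrix $M$ with $M_{kl}=B(e_k,e_l)$, and this identification is a linear isomorphism onto the space of symmetric $r\times r$ matrices over $\Fq$. So it suffices to check that the Gram matrices of our $\frac{r(r+1)}{2}$ forms are linearly independent in that matrix space.

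Next I would compute these Gram matrices explicitly. Evaluating on basis vectors, $\lambda_i^{\tens 2}(e_k,e_l)=\lambda_i(e_k)\lambda_i(e_l)=\delta_{ik}\delta_{il}$, so $\lambda_i^{\tens 2}$ has Gram matrix $E_{ii}$, the matrix with a single $1$ in position $(i,i)$. Similarly $(\lambda_i+\lambda_j)^{\tens 2}(e_k,e_l)=(\delta_{ik}+\delta_{jk})(\delta_{il}+\delta_{jl})$, which gives the matrix $E_{ii}+E_{jj}+E_{ij}+E_{ji}$ for $i<j$. Thus, writing $S_{ij}=E_{ij}+E_{ji}$ for $i<j$ (the standard ``off-diagonal'' symmetric matrices) and $S_{ii}=E_{ii}$, our family is $\{S_{ii}\}_{1\le i\le r}$ together with $\{S_{ii}+S_{jj}+S_{ij}\}_{1\le i<j\le r}$. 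The matrices $\{S_{kl}\}_{1\le k\le l\le r}$ are the standard basis of the symmetric matrices, so the transition matrix from that standard basis to our family is block-triangular: the diagonal part is the identity (each $S_{ii}\mapsto S_{ii}$ and each off-diagonal $S_{ij}$ appears with coefficient $1$ in exactly the corresponding $(\lambda_i+\lambda_j)^{\tens 2}$), and the only ``extra'' contributions are the diagonal terms $S_{ii}+S_{jj}$ added to each off-diagonal generator. Ordering the basis as all diagonal $S_{ii}$ first, then all off-diagonal $S_{ij}$, the change-of-basis matrix is unipotent upper triangular, hence invertible over any field; this proves linear independence.

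I do not expect a genuine obstacle here: the statement is essentially the classical fact that squares of linear forms span the symmetric bilinear forms, made effective by an explicit choice of generators. The one point that deserves a line of care is that the argument must be valid in every characteristic, including $\car\Fq=2$ — but the unipotent-triangularity of the transition matrix is characteristic-free, so nothing special happens at $2$ (unlike the situation for the polarization identity $B(u,v)=\tfrac12\big((\lambda_i+\lambda_j)^{\tens2}-\lambda_i^{\tens2}-\lambda_j^{\tens2}\big)$, which is exactly why the $(\lambda_i+\lambda_j)^{\tens 2}$ are used rather than cross terms $\lambda_i\otimes\lambda_j+\lambda_j\otimes\lambda_i$). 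A small alternative, if one prefers to avoid matrix bookkeeping, is to argue by induction on $r$ or to evaluate a hypothetical linear dependence $\sum_i a_i\lambda_i^{\tens2}+\sum_{i<j}b_{ij}(\lambda_i+\lambda_j)^{\tens2}=0$ on the pairs $(e_k,e_l)$: taking $k\ne l$ forces $b_{kl}=0$ for all $k<l$, and then taking $k=l$ forces $a_k=0$. Either way the proof is short.
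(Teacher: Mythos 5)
Your proof is correct and is essentially the same as the paper's: both pass to the dual basis coordinates so that $\lambda_i^{\tens 2}$ becomes the diagonal form $u_iv_i$ and $(\lambda_i+\lambda_j)^{\tens 2}-\lambda_i^{\tens 2}-\lambda_j^{\tens 2}$ becomes the off-diagonal form $u_iv_j+u_jv_i$. The paper phrases this as the standard basis of $\Sym(V;\Fq)$ being visibly in the span of the given family (then counts), while you phrase the same computation as a unipotent triangular change-of-basis matrix; the two are interchangeable.
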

\begin{proof}
Using $\lambda_1,\dots,\lambda_r$ as coordinate functions
we can suppose $V=(\Fq)^r$.
Then $\lambda_i^{\tens 2}$ is the symmetric bilinear form
\begin{equation}
\label{forme_diag}
(u,v)\mapsto u_iv_i
\end{equation}
and $(\lambda_i+\lambda_j)^{\tens 2}$ is $(u,v)\mapsto (u_i+u_j)(v_i+v_j)$,
hence 
$(\lambda_i+\lambda_j)^{\tens 2}-\lambda_i^{\tens 2}-\lambda_j^{\tens 2}$ is
\begin{equation}
\label{forme_croisee}
(u,v)\mapsto u_iv_j+u_jv_i.
\end{equation}
Then we conclude by recognizing these
\eqref{forme_diag} and \eqref{forme_croisee}
as forming the standard basis of $\Sym((\Fq)^r;\Fq)$.
\end{proof}

We will now be interested in the case $V=\Fqr$ is an extension field,
which can indeed be considered as a vector space over $\Fq$,
and we let $\gamma_1,\dots,\gamma_r$ be a basis
(for example $\gamma_i=\gamma^{i-1}$ for some choice of
a primitive element $\gamma\in\Fqr$).
Let also $\Tr:\Fqr\longto\Fq$ denote the trace function.
To each $a\in\Fqr$ we can associate a linear form
\beq
\begin{array}{cccc}
t_a: & \Fqr & \longto & \Fq\\
& x & \mapsto & \Tr(ax).
\end{array}
\eeq
The following is well known:
\begin{lemma}
\label{dualite_trace}
The map
\beq
\begin{array}{ccc}
\Fqr & \longto & (\Fqr)^\vee\\
a & \mapsto & t_a
\end{array}
\eeq
is an isomorphism of $\Fq$-vector spaces.
In particular, $t_{\gamma_1},\dots,t_{\gamma_r}$ form a basis
of $(\Fqr)^\vee$.
\end{lemma}

As a field, $\Fqr$ is endowed with its usual multiplication law,
which we will denote by $m_0$, so
\beq
m_0(x,y)=xy
\eeq
for $x,y\in\Fqr$. For any integer $j\geq1$, we can also define
a ``twisted multiplication law'' $m_j$ by
\beq
m_j(x,y)=xy^{q^j}+x^{q^j}y.
\eeq
Remark that these maps are symmetric and $\Fq$-bilinear
(although not $\Fqr$-bilinear in general). 

\begin{proposition}
\label{deux_bases}
Choose an ordering of the set
$\{t_{\gamma_i}\}_{1\leq i\leq r}\cup\{t_{\gamma_i+\gamma_j}\}_{1\leq i<j\leq r}$
and rename its elements accordingly, say:
\beq
\{t_{\gamma_i}\}_{1\leq i\leq r}\cup\{t_{\gamma_i+\gamma_j}\}_{1\leq i<j\leq r}=\{\phi_1,\dots,\phi_{\frac{r(r+1)}{2}}\}.
\eeq
Then:
\begin{itemize}
\item The family
\beq
(\phi_1^{\tens 2},\dots,\phi_{\frac{r(r+1)}{2}}^{\tens 2})
\eeq
is a basis of $\Sym(\Fqr;\Fq)$.
\item
If $r=2s+1$ is odd, the family
\beq
(t_{\gamma_i}\circ m_j)_{\substack{1\leq i\leq r\\0\leq j\leq s}}
\eeq
is a basis of $\Sym(\Fqr;\Fq)$.
\end{itemize}
\end{proposition}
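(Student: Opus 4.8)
The first assertion is immediate from the two preceding lemmas, so I would dispatch it in one line. By Lemma~\ref{dualite_trace} the forms $t_{\gamma_1},\dots,t_{\gamma_r}$ are a basis of $(\Fqr)^\vee$, and since $a\mapsto t_a$ is $\Fq$-linear one has $t_{\gamma_i}+t_{\gamma_j}=t_{\gamma_i+\gamma_j}$, hence $(t_{\gamma_i}+t_{\gamma_j})^{\tens 2}=t_{\gamma_i+\gamma_j}^{\tens 2}$. Thus the basis of $\Sym(\Fqr;\Fq)$ furnished by Lemma~\ref{lambda2_base} applied to $\lambda_i=t_{\gamma_i}$ is exactly the set $\{t_{\gamma_i}^{\tens 2}\}_{1\le i\le r}\cup\{t_{\gamma_i+\gamma_j}^{\tens 2}\}_{1\le i<j\le r}=\{\phi_1^{\tens 2},\dots,\phi_{r(r+1)/2}^{\tens 2}\}$, up to the chosen reordering, which does not affect the property of being a basis.

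For the second assertion the plan is to transport everything into the space of linearized polynomials and then simply read off coefficients. First I would note the dimension bookkeeping: the number of forms $t_{\gamma_i}\circ m_j$ with $1\le i\le r$ and $0\le j\le s$ is $r(s+1)=(2s+1)(s+1)=\tfrac{r(r+1)}{2}=\dim_{\Fq}\Sym(\Fqr;\Fq)$, so it is enough to prove they are $\Fq$-linearly independent. Next, using the nondegeneracy of the trace pairing (Lemma~\ref{dualite_trace}), every $\Fq$-bilinear form on $\Fqr$ can be written uniquely as $(x,y)\mapsto\Tr(x\,L(y))$ with $L$ an $\Fq$-linear endomorphism of $\Fqr$, and every such $L$ is represented by a unique linearized polynomial $L(y)=\sum_{k=0}^{r-1}c_k\,y^{q^k}$ with $c_k\in\Fqr$; the assignment ``bilinear form $\mapsto(c_0,\dots,c_{r-1})$'' is an $\Fq$-linear isomorphism onto $(\Fqr)^r$.

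I would then compute the image of each generator. For $j=0$, the form $t_{\gamma_i}\circ m_0$ corresponds to $c_0=\gamma_i$ and $c_k=0$ for $k\neq0$. For $1\le j\le s$, applying $\Tr(z)=\Tr(z^{q^{r-j}})$ to the summand $\Tr(\gamma_i x^{q^j}y)$ of $t_{\gamma_i}\circ m_j$ rewrites it as $\Tr(\gamma_i^{q^{r-j}}\,x\,y^{q^{r-j}})$, so $t_{\gamma_i}\circ m_j$ corresponds to $c_j=\gamma_i$, $c_{r-j}=\gamma_i^{q^{r-j}}$, and $c_k=0$ for $k\notin\{j,r-j\}$. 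The decisive observation — and the one place where oddness of $r$ enters essentially — is that for $r=2s+1$ the exponent blocks $\{0\},\{1,r-1\},\{2,r-2\},\dots,\{s,s+1\}$ are pairwise disjoint and each contains exactly one index in $\{0,1,\dots,s\}$. Hence in a relation $\sum_{i,j}a_{ij}(t_{\gamma_i}\circ m_j)=0$ with $a_{ij}\in\Fq$, reading off the coefficient of $y^{q^j}$ in the associated linearized polynomial isolates, for each fixed $j\in\{0,\dots,s\}$, the equation $\sum_i a_{ij}\gamma_i=0$; since $\gamma_1,\dots,\gamma_r$ is an $\Fq$-basis of $\Fqr$, this forces all $a_{ij}=0$. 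Together with the dimension count this gives the claimed basis. I expect the only real obstacle to be this disjointness of the pairs $\{j,r-j\}$ for $j\le s$: for even $r$ the middle index $r/2$ is self-paired, the family $\{t_{\gamma_i}\circ m_{r/2}\}_i$ no longer has the right span, and one is forced to proceed differently (cf. Section~\ref{informal}); everything else is routine verification.
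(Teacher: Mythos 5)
Your proof is correct, and for the second assertion it takes a genuinely different route from the paper's. After the common dimension count $r(s+1)=\tfrac{r(r+1)}{2}$, the paper proves that the family $(t_{\gamma_i}\circ m_j)$ is \emph{spanning}: it writes down the single explicit identity
\beq
t_a^{\tens 2}=\sum_{0\leq j\leq s}t_{a^{1+q^j}}\circ m_j,
\eeq
obtained by expanding $\Tr(ax)\Tr(ay)$ and regrouping the $(2s+1)^2$ cross terms along the diagonals $i-i'\equiv \pm j$, and then invokes the first bullet point. You instead prove \emph{linear independence}, by embedding $\Sym(\Fqr;\Fq)$ into the full space of $\Fq$-bilinear forms, representing the latter via $(x,y)\mapsto\Tr(xL(y))$ with $L$ a linearized polynomial $\sum_k c_ky^{q^k}$, computing that $t_{\gamma_i}\circ m_j$ has $c_j=\gamma_i$ and (for $j\geq1$) $c_{r-j}=\gamma_i^{q^{r-j}}$ as its only nonzero coordinates, and observing that for odd $r$ the index $j\in\{0,\dots,s\}$ appears in $\{j,r-j\}$ for a unique $j'$, so the $q^j$-coefficient decouples and gives $\sum_i a_{ij}\gamma_i=0$. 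Both arguments are of comparable length and both isolate precisely where oddness of $r$ is used (the paper: so that the diagonals $j$ and $r-j$ are distinct and only $0\leq j\leq s$ need appear; you: so that $r/2$ is not a self-paired index). The paper's spanning identity has the minor advantage of being a reusable closed formula (it is what makes the map $\theta$ in Corollary~\ref{change_base} explicit), whereas your independence argument makes the even-$r$ obstruction described in Remark~\ref{carre_symetrique} especially transparent. Either approach completes the proof once combined with the dimension count.
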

\begin{proof}
The first claim is a consequence of Lemma~\ref{lambda2_base}
and Lemma~\ref{dualite_trace}.
To prove the second claim, start by remarking that the given
family has the correct size $r(s+1)=\frac{r(r+1)}{2}$.
It suffices thus to show that it is a generating family,
and for this (because of the first claim)
it suffices to show that each $t_a^{\tens 2}$,
for $a\in\Fqr$, can be written as a linear combination
of the $t_b\circ m_j$, for $b\in\Fqr$ and $0\leq j\leq s$.
However for any $x,y\in\Fqr$ we have
\beq
\begin{split}
\Tr(ax)\Tr(ay)&=(ax+a^qx^q+\cdots+a^{q^{2s}}x^{q^{2s}})(ay+a^qy^q+\cdots+a^{q^{2s}}y^{q^{2s}})\\
&=\Tr(a^2xy)+\sum_{1\leq j\leq s}\Tr(a^{1+q^j}(xy^{q^j}+x^{q^j}y))
\end{split}
\eeq
which can be restated 
\beq
t_a^{\tens 2}=\sum_{0\leq j\leq s}t_{a^{1+q^j}}\circ m_j
\eeq
as wanted.
\end{proof}


From now on we suppose $r=2s+1$
is odd, so $\frac{r(r+1)}{2}=(s+1)(2s+1)$. 
Consider the symmetric $\Fq$-bilinear maps
\beq
\Phi=(\phi_1^{\tens 2},\dots,\phi_{(s+1)(2s+1)}^{\tens 2}):\Fqrs\times\Fqrs\longto(\Fq)^{(s+1)(2s+1)}
\eeq
and
\beq
\Psi=(m_0,\dots,m_s):\Fqrs\times\Fqrs\longto(\Fqrs)^{s+1}.
\eeq
Proposition~\ref{deux_bases} can then be restated as follows:

\begin{corollary}
\label{change_base}
There is an isomorphism
of $\Fq$-vector spaces
\beq
\theta:(\Fq)^{(s+1)(2s+1)}\overset{\sim}{\longto}(\Fqrs)^{s+1}
\eeq
such that
\beq
\theta\circ\Phi=\Psi.
\eeq
\end{corollary}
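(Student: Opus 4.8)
The plan is to read Corollary~\ref{change_base} as a purely linear-algebra statement: it compares two bases of the single space $\Sym(\Fqrs;\Fq)$, transported through the ``evaluation'' maps $\Phi$ and $\Psi$. Write $N=(s+1)(2s+1)=\frac{r(r+1)}{2}$. By Proposition~\ref{deux_bases}, both $(\phi_1^{\tens 2},\dots,\phi_N^{\tens 2})$ and $(t_{\gamma_i}\circ m_j)_{1\leq i\leq r,\,0\leq j\leq s}$ are bases of the $N$-dimensional $\Fq$-vector space $\Sym(\Fqrs;\Fq)$. First I would fix once and for all a bijection between $\{1,\dots,N\}$ (the index set ordering the $\phi_k$, hence the coordinates of the source $(\Fq)^N$) and $\{(i,j):1\leq i\leq r,\ 0\leq j\leq s\}$ (the index set that will label the coordinates of the target $(\Fqrs)^{s+1}$ once each block of $\Fqrs$ has been expanded in the basis dual to $\gamma_1,\dots,\gamma_r$). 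This is the only bookkeeping choice involved, and the final statement does not depend on it.

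Next, use Lemma~\ref{dualite_trace} to set up the identification $\iota:(\Fqrs)^{s+1}\overset{\sim}{\longto}(\Fq)^N$: on each of the $s+1$ blocks send $z\in\Fqrs$ to $(t_{\gamma_1}(z),\dots,t_{\gamma_r}(z))\in(\Fq)^r$, an $\Fq$-linear isomorphism because the $t_{\gamma_i}$ form a basis of $(\Fqrs)^\vee$; then reorder the resulting $r(s+1)=N$ coordinates via the fixed bijection. Composing with $\Psi=(m_0,\dots,m_s)$, the coordinate of $\iota\circ\Psi$ indexed by $(i,j)$ sends $(x,y)$ to $t_{\gamma_i}(m_j(x,y))=(t_{\gamma_i}\circ m_j)(x,y)$; in other words $\iota\circ\Psi$ is literally the map ``evaluate the second basis of Proposition~\ref{deux_bases}''. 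By definition $\Phi$ is likewise the map ``evaluate the first basis'', its coordinate $k$ sending $(x,y)$ to $\phi_k^{\tens 2}(x,y)$.

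The one substantive step, now immediate, is the change of basis. Since the two families are bases of the same space, there is a unique invertible $N\times N$ matrix $A$ over $\Fq$ writing the second basis in terms of the first: if $\psi_1,\dots,\psi_N$ denotes the second basis ordered via the fixed bijection, then $\psi_k=\sum_\ell A_{k\ell}\,\phi_\ell^{\tens 2}$ for all $k$. Evaluating these identities of bilinear forms at an arbitrary $(x,y)\in\Fqrs\times\Fqrs$ and collecting the $N$ coordinates, the left-hand sides assemble into $\iota(\Psi(x,y))$ and the right-hand sides into $A\cdot\Phi(x,y)$, so $\iota\circ\Psi=A\circ\Phi$ as maps $\Fqrs\times\Fqrs\to(\Fq)^N$. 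Setting $\theta=\iota^{-1}\circ A:(\Fq)^N\overset{\sim}{\longto}(\Fqrs)^{s+1}$, an $\Fq$-linear isomorphism as a composite of two, one gets $\theta\circ\Phi=\iota^{-1}\circ A\circ\Phi=\iota^{-1}\circ(\iota\circ\Psi)=\Psi$, which is the claim.

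Thus the only real obstacle is organisational: keeping the three index sets ($\{1,\dots,N\}$, the pairs $(i,j)$, and the block-coordinates of $(\Fqrs)^{s+1}$) consistently identified, so that ``evaluate a basis'' means literally $\Phi$ on one side and $\iota\circ\Psi$ on the other; there is no hard analysis or arithmetic beyond Proposition~\ref{deux_bases} itself. For completeness I would also record the coordinate-free reformulation suggested in Section~\ref{informal}: $\Phi$ realizes the universal symmetric $\Fq$-bilinear map on $\Fqrs\times\Fqrs$, with target $(\Fq)^N$ identified with $\SqF\Fqrs$ via the basis dual to $(\phi_k^{\tens 2})$; hence $\Psi$ factors uniquely as $\theta\circ\Phi$ for a unique $\Fq$-linear $\theta$, and $\theta$ is an isomorphism precisely because the coordinate functionals of $\Psi$, namely the $t_{\gamma_i}\circ m_j$, span — hence, by dimension count, form a basis of — $\Sym(\Fqrs;\Fq)=(\SqF\Fqrs)^\vee$, which is exactly the second bullet of Proposition~\ref{deux_bases}.
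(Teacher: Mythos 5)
Your proof is correct and takes essentially the same route as the paper: use Lemma~\ref{dualite_trace} to write $(\Fqrs)^{s+1}$ in trace coordinates, recognize that both $\Phi$ and (the trace-coordinate version of) $\Psi$ are the ``evaluation'' maps associated to the two bases of $\Sym(\Fqrs;\Fq)$ from Proposition~\ref{deux_bases}, and define $\theta$ as the corresponding invertible change-of-basis transform. The paper's proof is a one-sentence version of exactly this, and your closing coordinate-free reformulation matches the paper's own Remark~\ref{carre_symetrique}.
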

\begin{proof}
Set $r=2s+1$,
use the $t_{\gamma_i}$ as coordinate functions on $\Fqr$ as allowed
by Lemma~\ref{dualite_trace},
and define $\theta$ as the invertible linear transformation
that maps the first basis of $\Sym(\Fqr;\Fq)$ given in
Proposition~\ref{deux_bases} to the second one.
\end{proof}

\begin{remark}
\label{carre_symetrique}
For the more sophisticated reader, recall that the symmetric square
of a vector space $V$ over $\Fq$ can be defined, for our purpose,
as the dual of the space of symmetric bilinear forms on it:
$\SqF V=\Sym(V;\Fq)^\vee$.
We let $(u,v)\mapsto u\cdot v$ be the universal
symmetric bilinear map $V\times V\longto\SqF V$,
where $u\cdot v\in\SqF V$ is
the ``evaluation'' element that sends $F\in\Sym(V;\Fq)$ to $F(u,v)$.
Recall also the universal property of the symmetric square:
for any $\Fq$-vector space $W$,
there is a natural identification
\beq
\left\lbrace
\begin{array}{c}
\text{symmetric bilinear maps}\\
V\times V\longto W
\end{array}
\right\rbrace
=
\left\lbrace
\begin{array}{c}
\text{linear maps}\\
\SqF V\longto W
\end{array}
\right\rbrace
\eeq
as $\Fq$-vector spaces,
where a linear map $f:\SqF V\longto W$ corresponds to the
symmetric bilinear map $(u,v)\mapsto f(u\cdot v)$.

So, in the case $V=\Fqrs$,
the symmetric bilinear maps $\Phi$ and $\Psi$
give rise to linear maps $\overline{\Phi}$ and $\overline{\Psi}$ on $\SqF\Fqrs$,
and Proposition~\ref{deux_bases} expresses that these
\beq
\begin{array}{cccc}
\overline{\Phi}: & \SqF\Fqrs & \overset{\sim}{\longto} & (\Fq)^{(s+1)(2s+1)}\\
& x\cdot y & \mapsto & (\phi_1(x)\phi_1(y),\,\phi_2(x)\phi_2(y),\,\dots\,)
\end{array}
\eeq
and
\beq
\begin{array}{cccc}
\overline{\Psi}: & \SqF\Fqrs & \overset{\sim}{\longto} & (\Fqrs)^{s+1}\\
& x\cdot y & \mapsto & (xy,\,xy^q+x^qy,\,\dots,\,xy^{q^s}+x^{q^s}y)
\end{array}
\eeq
are isomorphisms of $\Fq$-vector spaces
(while $\theta=\overline{\Psi}\circ\overline{\Phi}^{-1}$
in Corollary~\ref{change_base}).

A similar result can be given in the case of an even degree
extension $\F_{q^{2s}}$, with only one minor change. Indeed, in this case
remark that one has $(xy^{q^s}+x^{q^s}y)^{q^s}=x^{q^s}y+xy^{q^s}$
for all $x,y\in\F_{q^{2s}}$, which means
that $m_s$ takes values in the subfield $\F_{q^s}$ of $\F_{q^{2s}}$.
Then the very same arguments as before show that $m_0,\dots,m_s$
induce an isomorphism of $\Fq$-vector spaces
\beq
\SqF\F_{q^{2s}}\overset{\sim}{\longto}(\F_{q^{2s}})^s\times\F_{q^s},
\eeq
and composing with traces
gives a basis of $\Sym(\F_{q^{2s}};\Fq)$
in this case also.
\end{remark}

\section{Bilinear study of concatenated codes}
\label{concatenation_et_bilinearite}

If $\cA$ is a vector space
of finite dimension over $\Fq$, if $n\geq1$ is an integer
and $C\subset\cA^n$
is a linear subspace, and if $f:\cA\longto\cB$ is a linear
map from $\cA$ to another vector space $\cB$, we denote
by $f(C)\subset\cB^n$ the subspace obtained by applying $f$
componentwise to the ``codewords'' of $C$:
\beq
f(C)=\{(f(c_1),\dots,f(c_n))\in\cB^n\;|\;c=(c_1,\dots,c_n)\in C\subset\cA^n\}.
\eeq
Also if $C'\subset\cA'^n$ is a code of the same length over another
linear alphabet $\cA'$, and if
$F:\cA\times\cA'\longto\cB$ is a bilinear map,
we denote by $\langle F(C,C')\rangle\subset\cB^n$ the linear span of the
set of elements obtained by applying $F$ componentwise to pairs
of codewords in $C$ and $C'$:
\begin{equation}
\label{<F(C,C')>}
\langle F(C,C')\rangle=\{{\sum}_{\substack{c\in C\;\;\\ c'\in C'}}\alpha_{c,c'}(F(c_1,c'_1),\dots,F(c_n,c'_n))\;|\;\alpha_{c,c'}\in\Fq\}
\end{equation}
which generalizes \eqref{C<2>}.

We will be interested in the case $\cA=\Fqrs$ is an odd degree
extension field of $\Fq$.
Recall the notations from the previous section.
First we have the linear map
\beq
\phi=(\phi_1,\dots,\phi_{(s+1)(2s+1)}):\Fqrs\longto(\Fq)^{(s+1)(2s+1)}
\eeq
as well as the symmetric bilinear map
\beq
\Phi=(\phi_1^{\tens 2},\dots,\phi_{(s+1)(2s+1)}^{\tens 2}):\Fqrs\times\Fqrs\longto(\Fq)^{(s+1)(2s+1)}.
\eeq
If $C\subset(\Fqrs)^n$ is a linear code of length $n$ over $\Fqrs$,
we will consider $\phi(C)$ and $\langle \Phi(C,C)\rangle$ as codes of length
$N=(s+1)(2s+1)n$ over $\Fq$,
using the natural identification $((\Fq)^{(s+1)(2s+1)})^n=(\Fq)^N$.
Then:
\begin{lemma}
\label{Phi=phi<2>}
With these notations,
\beq
\langle \Phi(C,C)\rangle=\phi(C)^{\langle 2\rangle}.
\eeq
\end{lemma}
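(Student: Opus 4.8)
The plan is to unwind both sides of the claimed equality into explicit descriptions and check that they coincide; the proof is essentially a bookkeeping argument comparing generators, so the main work is setting up notation carefully rather than overcoming a genuine obstacle.

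First I would spell out the right-hand side. A codeword of $C$ is an $n$-tuple $c=(c_1,\dots,c_n)$ with each $c_k\in\Fqrs$. Applying $\phi$ componentwise gives $\phi(c)=(\phi(c_1),\dots,\phi(c_n))\in((\Fq)^{(s+1)(2s+1)})^n$, which under the natural identification is the length-$N$ vector whose block $k$ is $(\phi_1(c_k),\dots,\phi_{(s+1)(2s+1)}(c_k))$. By definition \eqref{C<2>}, $\phi(C)^{\langle 2\rangle}$ is the $\Fq$-span of all products $\phi(c)*\phi(c')$ for $c,c'\in C$. Since $*$ is coordinatewise, the block $k$ of $\phi(c)*\phi(c')$ is $(\phi_1(c_k)\phi_1(c'_k),\dots,\phi_{(s+1)(2s+1)}(c_k)\phi_{(s+1)(2s+1)}(c'_k))$, which by definition of $\phi_i^{\tens 2}$ is exactly $(\phi_1^{\tens 2}(c_k,c'_k),\dots)=\Phi(c_k,c'_k)$. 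Hence $\phi(c)*\phi(c')$ equals, block by block, the vector $(\Phi(c_1,c'_1),\dots,\Phi(c_n,c'_n))$ appearing inside the span \eqref{<F(C,C')>} defining $\langle\Phi(C,C)\rangle$.

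Next I would observe that this matching of individual elements upgrades to an equality of linear spans. On the one hand, every generator $\phi(c)*\phi(c')$ of $\phi(C)^{\langle 2\rangle}$ is one of the elements $(\Phi(c_1,c'_1),\dots,\Phi(c_n,c'_n))$ being spanned in \eqref{<F(C,C')>}, so $\phi(C)^{\langle 2\rangle}\subseteq\langle\Phi(C,C)\rangle$. On the other hand, every generator $(\Phi(c_1,c'_1),\dots,\Phi(c_n,c'_n))$ of $\langle\Phi(C,C)\rangle$ equals $\phi(c)*\phi(c')$ with $c=(c_1,\dots,c_n)\in C$ and $c'=(c'_1,\dots,c'_n)\in C'=C$, hence lies in $\phi(C)^{\langle 2\rangle}$; since $\phi(C)^{\langle 2\rangle}$ is a linear subspace it contains the span of these generators, giving the reverse inclusion $\langle\Phi(C,C)\rangle\subseteq\phi(C)^{\langle 2\rangle}$. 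Combining the two inclusions yields the lemma.

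The only point requiring a little care—and the closest thing to an obstacle—is keeping the two identifications straight: the identification $((\Fq)^{(s+1)(2s+1)})^n=(\Fq)^N$ used to view $\phi(C)$ as a code over $\Fq$, and the fact that the outer code $C'$ in \eqref{<F(C,C')>} is taken equal to $C$ here with the \emph{same} linear alphabet $\cA=\cA'=\Fqrs$ and the bilinear map $F=\Phi$. Once one checks that componentwise application of $\phi$ followed by the coordinatewise product $*$ on $(\Fq)^N$ agrees, block by block, with componentwise application of $\Phi$—which is immediate from $\phi_i(x)\phi_i(y)=\phi_i^{\tens 2}(x,y)$—there is nothing more to prove. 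No deeper input (not even Proposition~\ref{deux_bases} or Corollary~\ref{change_base}) is needed for this particular lemma; those enter only later when one exploits the bijectivity of $\theta$.
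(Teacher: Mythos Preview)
Your proof is correct and follows exactly the route the paper intends: the paper's own proof is the single sentence ``Direct consequence of the definitions,'' and what you have written is precisely that consequence spelled out, matching the generators $\phi(c)*\phi(c')$ with $(\Phi(c_1,c'_1),\dots,\Phi(c_n,c'_n))$ via $\phi_i(x)\phi_i(y)=\phi_i^{\tens 2}(x,y)$. Your closing remark that neither Proposition~\ref{deux_bases} nor Corollary~\ref{change_base} is needed here is also accurate.
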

\begin{proof}
Direct consequence of the definitions.
\end{proof}

We also have the symmetric $\Fq$-bilinear maps
\beq
m_j:\Fqrs\times\Fqrs\longto\Fqrs
\eeq
for $0\leq j\leq s$, from which we formed
\beq
\Psi=(m_0,\dots,m_s):\Fqrs\times\Fqrs\longto(\Fqrs)^{s+1}.
\eeq
Remark that $m_0$ is not only $\Fq$-bilinear, it is also $\Fqrs$-bilinear.
So if the code $C\subset(\Fqrs)^n$ is $\Fqrs$-linear, then so is $\langle m_0(C,C)\rangle$.
In fact 
$\langle m_0(C,C)\rangle=C^{\langle 2\rangle}$ provided
now componentwise multiplication $*$
is meant over $\Fqrs$.

On the other hand, for $j\geq 1$, $m_j$ is only $\Fq$-bilinear.
So $\langle m_j(C,C)\rangle$ will only be a $\Fq$-linear
subspace of $(\Fqrs)^n$ (and similarly for $\langle \Psi(C,C)\rangle$). 
Nevertheless we will still define the weight of a codeword in $\langle m_j(C,C)\rangle$
and the minimum distance $\dmin(\langle m_j(C,C)\rangle)$
as the usual weight and distance taken in $(\Fqrs)^n$, that is,
over the alphabet $\Fqrs$.

\begin{proposition}
\label{inegalite_dmin}
With the notations above,
\beq
\dmin(\phi(C)^{\langle 2\rangle})\geq\min_{0\leq j\leq s}\dmin(\langle m_j(C,C)\rangle).
\eeq
\end{proposition}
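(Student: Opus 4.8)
The strategy is to exploit the isomorphism $\theta$ from Corollary~\ref{change_base} to transfer the weight-comparison argument sketched in Section~\ref{informal} into a rigorous statement. First I would apply $\theta$ componentwise: since $\theta:(\Fq)^{(s+1)(2s+1)}\overset{\sim}{\longto}(\Fqrs)^{s+1}$ is an $\Fq$-linear isomorphism, applying it block-by-block to codewords of length $N=(s+1)(2s+1)n$ yields an $\Fq$-linear isomorphism $\Theta$ from $(\Fq)^N$ onto $((\Fqrs)^{s+1})^n$. Crucially, $\Theta$ sends a codeword with nonzero support in a given block (of size $(s+1)(2s+1)$) to a codeword with nonzero support in the corresponding block (of size $s+1$ over $\Fqrs$), and vice versa, because $\theta$ is a bijection fixing $0$. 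Hence $\Theta$ maps the block-support of a word bijectively, so the number of nonzero blocks is preserved.

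Next I would identify the image of the relevant code. By Lemma~\ref{Phi=phi<2>} we have $\phi(C)^{\langle 2\rangle}=\langle\Phi(C,C)\rangle$, and by Corollary~\ref{change_base}, $\theta\circ\Phi=\Psi$. Applying this componentwise and using that $\Theta$ commutes with the formation of linear spans (it is linear and bijective), we obtain $\Theta(\langle\Phi(C,C)\rangle)=\langle\Psi(C,C)\rangle$, where $\langle\Psi(C,C)\rangle\subset((\Fqrs)^{s+1})^n$ is the $\Fq$-span of the componentwise values of $\Psi=(m_0,\dots,m_s)$ on pairs of codewords of $C$. Now take any nonzero $c\in\phi(C)^{\langle 2\rangle}$; its image $\Theta(c)\in\langle\Psi(C,C)\rangle$ is nonzero and, as noted, has exactly as many nonzero blocks over $(\Fqrs)^{s+1}$ as $c$ has nonzero blocks over $(\Fq)^{(s+1)(2s+1)}$. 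Since the weight of $c$ (as a word of length $N$ over $\Fq$) is at least its number of nonzero blocks, it suffices to bound below the number of nonzero blocks of $\Theta(c)$.

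For this I would write $\Theta(c)=(w_0,\dots,w_s)$ with each $w_j\in\langle m_j(C,C)\rangle\subset(\Fqrs)^n$ — this is just the coordinate-wise decomposition of $\langle\Psi(C,C)\rangle$ according to the product structure $(\Fqrs)^{s+1}$, which is compatible with spans since $\Psi=(m_0,\dots,m_s)$ is formed by stacking the $m_j$. Since $c\neq0$, at least one $w_j$ is nonzero, so $w_j$ has weight at least $\dmin(\langle m_j(C,C)\rangle)\geq\min_{0\le j\le s}\dmin(\langle m_j(C,C)\rangle)$. But any block of $\Theta(c)$ in which $w_j$ has a nonzero coordinate is itself a nonzero block; hence the number of nonzero blocks of $\Theta(c)$ is at least the weight of $w_j$, which is at least $\min_{0\le j\le s}\dmin(\langle m_j(C,C)\rangle)$. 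Chaining the inequalities gives $\dmin(\phi(C)^{\langle 2\rangle})\ge\min_{0\le j\le s}\dmin(\langle m_j(C,C)\rangle)$, as desired.

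The only subtle point — and the place where the construction of Section~\ref{bilinearite_et_extensions} does its real work — is the claim that $\theta$ being a bijection forces nonzero blocks to correspond to nonzero blocks. This is exactly why the bijectivity of $\theta$ (equivalently, that the $t_{\gamma_i}\circ m_j$ form a \emph{basis} of $\Sym(\Fqrs;\Fq)$, not merely a spanning set) was established; in the "naive approach" where one uses a non-injective reconstruction map $\theta$, this step fails and one only learns that the block symbols lie in $\ker\theta$. Everything else is bookkeeping: checking that $\Theta$ intertwines the componentwise operations $\Phi$, $\Psi$ with their linear spans, which follows formally from $\theta\circ\Phi=\Psi$ and linearity of $\theta$.
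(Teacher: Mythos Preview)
Your proof is correct and follows essentially the same approach as the paper: apply $\theta$ blockwise (using Corollary~\ref{change_base} and Lemma~\ref{Phi=phi<2>}) to pass from $\phi(C)^{\langle 2\rangle}=\langle\Phi(C,C)\rangle$ to $\langle\Psi(C,C)\rangle$, observe that the bijectivity of $\theta$ preserves the number of nonzero blocks, and then project onto some nonzero component $w_j\in\langle m_j(C,C)\rangle$ to bound this block-count from below. The paper phrases the same argument contrapositively (assuming the weight is below the minimum and deducing $c=0$), but the content is identical.
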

\begin{proof}
Let $c\in\phi(C)^{\langle 2\rangle}$ be a codeword.
We have to show that if $c$ has weight
\begin{equation}
\label{w<dmin}
w<\dmin(\langle m_j(C,C)\rangle)
\end{equation}
for all $0\leq j\leq s$,
then it is the zero codeword.

Here $c$ is seen as a word of length $N$ over the alphabet $\Fq$,
but we can also see it as a word of length $n$ over the alphabet
$(\Fq)^{(s+1)(2s+1)}$, and as such obviously it has weight
\begin{equation}
\label{wtilde}
\widetilde{w}\leq w.
\end{equation}
Now, using Corollary~\ref{change_base} and Lemma~\ref{Phi=phi<2>},
we apply $\theta$ blockwise to get a codeword
$\theta(c)\in\langle\Psi(C,C)\rangle$.
Since $\theta$ is invertible, we see that,
considered as a word of length $n$ over the alphabet
$(\Fqrs)^{s+1}$, this $\theta(c)$ has the same weight $\widetilde{w}$.

If we denote by $\pi_0,\dots,\pi_s$ the $s+1$ coordinate projections
$(\Fqrs)^{s+1}\longto\Fqrs$,
then by construction we have $m_j=\pi_j\circ\Psi$, so applying $\pi_j$
blockwise we get a codeword $\pi_j(\theta(c))\in\langle m_j(C,C)\rangle$,
of weight at most $\widetilde{w}$.
But then, $\pi_j(\theta(c))$ is the zero codeword because of
\eqref{w<dmin} and \eqref{wtilde}, and since this holds for all $j$,
we conclude that $\theta(c)$ is zero, hence $c$ is zero.
\end{proof}

\begin{remark}
This is a continuation of Remark~\ref{carre_symetrique}.
Recall from the symmetric square construction
that we have a
universal product $\cdot:\Fqrs\times\Fqrs\longto\SqF\Fqrs$. 
The underlying notion in the proof of Proposition~\ref{inegalite_dmin}
is then that of the ``universal symmetric bilinear span''
\beq
\langle C\cdot C\rangle\subset(\SqF\Fqrs)^n
\eeq
constructed from $C$ and $\cdot$ as in \eqref{<F(C,C')>},
and of which $\langle \Phi(C,C)\rangle=\phi(C)^{\langle 2\rangle}$
and $\langle \Psi(C,C)\rangle$ are two incarnations, under the
invertible linear changes of alphabets $\overline{\Phi}$
and $\overline{\Psi}$. In particular the weight $\widetilde{w}$
in \eqref{wtilde}
should be interpreted as the weight of $c$ considered as a
word over the alphabet $\SqF\Fqrs$.
\end{remark}

\smallskip

Now let $K$ be a finite field (we will apply both cases
$K=\Fq$ and $K=\Fqrs$), and let $*$ denote coordinatewise
multiplication in the vector space $K^n$,
which is a symmetric $K$-bilinear map $K^n\times K^n\longto K^n$.
If $C,C'\subset K^n$ are two linear codes of the same length,
we can define their intersection span
\beq
\langle C*C'\rangle\subset K^n
\eeq
as in \eqref{<F(C,C')>}, and iteratively, setting $C^{<0>}$ as
the $[n,1,n]$ repetition code, we can define
\emph{higher self-intersection spans} (or ``powers'')
\beq
C^{\langle t+1\rangle}=\langle C^{\langle t\rangle}*C\rangle
\eeq
for $t\geq0$.
Equivalently, $C^{\langle t\rangle}$ is the linear span
of the set of coordinatewise products of $t$-uples of codewords
from $C$. 

In particular we have $C^{\langle 1\rangle}=C$,
and $C^{\langle 2\rangle}$ is the same as in \eqref{C<2>}.
More generally we have the natural identities
\beq
\langle C^{\langle t\rangle}*C^{\langle t'\rangle}\rangle=C^{\langle t+t'\rangle}
\eeq
and
\beq
(C^{\langle t\rangle})^{\langle t'\rangle}=C^{\langle tt'\rangle}.
\eeq

\begin{lemma}
\label{coordonnee}
Let $t\geq1$. If $c\in C^{\langle t\rangle}$ is a codeword
and if $i$ is a coordinate
at which $c$ is non-zero, then there is already some $c'\in C$
that is non-zero at $i$.
\end{lemma}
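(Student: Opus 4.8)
The plan is to prove this by induction on $t$, using the defining relation $C^{\langle t+1\rangle}=\langle C^{\langle t\rangle}*C\rangle$ together with the description of the span as an additive span of coordinatewise products.

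First I would treat the base case. For $t=1$ the statement is trivial since $C^{\langle 1\rangle}=C$: if $c\in C$ is nonzero at $i$, take $c'=c$. (One should be slightly careful about $t=0$ if one wanted to include it, since $C^{\langle 0\rangle}$ is the repetition code, but the lemma is stated for $t\geq1$ so this does not arise.)

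For the inductive step, suppose the claim holds for $t$ and let $c\in C^{\langle t+1\rangle}=\langle C^{\langle t\rangle}*C\rangle$ be nonzero at coordinate $i$. By definition of the linear span, $c$ is a linear combination $c=\sum_k \alpha_k\,(d^{(k)}*e^{(k)})$ with $d^{(k)}\in C^{\langle t\rangle}$ and $e^{(k)}\in C$. The $i$-th coordinate is $c_i=\sum_k \alpha_k\,d^{(k)}_i e^{(k)}_i$, which is nonzero by hypothesis; hence there is at least one index $k$ with $d^{(k)}_i\neq 0$ and $e^{(k)}_i\neq 0$. Now $e^{(k)}\in C$ is nonzero at $i$, so we may simply take $c'=e^{(k)}$. (Alternatively, and perhaps more in the spirit of an iterated statement, one could also invoke the inductive hypothesis on $d^{(k)}\in C^{\langle t\rangle}$ to produce a $c'\in C$ nonzero at $i$; but the factor $e^{(k)}$ already does the job directly, so the induction is almost immediate once the base case is in place.)

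I do not expect any real obstacle here: the only thing to get right is the bookkeeping that an additive span of products is nonzero at a coordinate only if at least one of the constituent products is, and that a product $d*e$ is nonzero at $i$ only if both $d_i\neq0$ and $e_i\neq0$ — both of which are immediate from the componentwise definition of $*$. The statement is essentially a formalization of the remark, made earlier in the paper, that the support of $c*c'$ is the intersection of the supports of $c$ and $c'$, iterated $t$ times; so the support of any element of $C^{\langle t\rangle}$ is contained in the union, over codewords, of their supports.
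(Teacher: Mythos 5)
Your proof is correct; the paper simply writes ``Obvious'' for this lemma, and your induction on $t$ (reducing to the fact that a nonzero coordinate of a linear combination of products $d*e$ forces some $e\in C$ to be nonzero there) is exactly the reasoning the paper leaves implicit.
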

\begin{proof}
Obvious.
\end{proof}

Now given a linear code $C\subset K^n$, for each integer $t\geq0$,
we can define the ``higher'' dimension $\dimt$,
distance $\dtmin$, rate $\ratet$,
and relative distance $\dtrel$ of $C$,
as those parameters for $C^{\langle t\rangle}$.
Then:

\begin{proposition}
\label{monotonie}
Let $C$ be a (non-zero) linear code.
Then for all $t\geq0$, we have
\beq
\dimu(C)\geq\dimt(C)
\eeq
and
\beq
\dumin(C)\leq\dtmin(C).
\eeq
\end{proposition}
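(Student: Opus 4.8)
The plan is to dispose of the boundary case $t=0$ at once and then treat $t\geq1$ uniformly, using only Lemma~\ref{coordonnee}. For $t=0$ the code $C^{\langle 0\rangle}$ is the $[n,1,n]$ repetition code and $C^{\langle 1\rangle}=C$, so $\dimu(C)=\dim C\geq1=\dimt(C)$ because $C\neq0$, while $\dumin(C)=\dmin(C)\leq n=\dtmin(C)$ because every nonzero codeword has weight at most $n$. So I assume $t\geq1$ from now on; I will also use that, by Lemma~\ref{coordonnee} (read in both directions), $C$ and $C^{\langle t\rangle}$ have the same ``code support'' $S=\bigcup_{c\in C}\Supp(c)$.

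For the distance inequality I would simply push a shortest codeword down one level. Pick a nonzero $x\in C^{\langle t\rangle}$ realizing $\dtmin(C)$, choose any coordinate $i\in\Supp(x)$, and use Lemma~\ref{coordonnee} to find $c\in C$ with $c_i\neq0$. Then $x*c$ lies in $\langle C^{\langle t\rangle}*C\rangle=C^{\langle t+1\rangle}$, is nonzero (its $i$-th coordinate is $x_ic_i\neq0$), and has support contained in $\Supp(x)$, so its weight is at most that of $x$; hence $\dumin(C)\leq\dtmin(C)$.

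The dimension inequality is the one point that takes a moment's thought, and I expect it to be the only real obstacle. The tempting shortcut — multiply all of $C^{\langle t\rangle}$ by a single fixed codeword $c_0\in C$ — does work for $t=1$, where one can take $c_0$ of maximal support (its support then necessarily contains an information set of $C$); but it breaks down for $t\geq2$, since $C^{\langle t\rangle}$ can have strictly larger dimension than $C$, and then no codeword of $C$ need have enough support to separate $C^{\langle t\rangle}$. The fix is to separate $C^{\langle t\rangle}$ coordinate by coordinate. Set $k=\dimt(C)$, fix an information set $I=\{i_1,\dots,i_k\}$ of $C^{\langle t\rangle}$ together with the corresponding systematic basis $c_1,\dots,c_k$ of $C^{\langle t\rangle}$ characterized by $(c_\ell)_{i_m}=\delta_{\ell m}$. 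For each $\ell$, apply Lemma~\ref{coordonnee} to the codeword $c_\ell\in C^{\langle t\rangle}$, which is nonzero at $i_\ell$, to obtain $d_\ell\in C$ with $(d_\ell)_{i_\ell}\neq0$, and put $y_\ell=c_\ell*d_\ell\in C^{\langle t+1\rangle}$. The $i_m$-th coordinate of $y_\ell$ equals $(c_\ell)_{i_m}(d_\ell)_{i_m}=\delta_{\ell m}(d_\ell)_{i_m}$, which is a nonzero scalar if $m=\ell$ and is $0$ otherwise; hence the restrictions of $y_1,\dots,y_k$ to the coordinates in $I$ are nonzero scalar multiples of the $k$ distinct standard basis vectors, so they are linearly independent, and therefore so are $y_1,\dots,y_k$ in $C^{\langle t+1\rangle}$. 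This exhibits $k$ linearly independent elements of $C^{\langle t+1\rangle}$, giving $\dimu(C)\geq k=\dimt(C)$ and finishing the proof.

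The only ``black box'' used, besides Lemma~\ref{coordonnee}, is the standard fact that a code of dimension $k$ admits, for each of its information sets $I$, a basis in systematic form with respect to $I$ — equivalently, that the projection onto the coordinates of $I$ is an isomorphism onto its image. Everything else is bookkeeping.
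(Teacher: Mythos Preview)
Your proof is correct and follows essentially the same approach as the paper's: after disposing of $t=0$, both arguments take a systematic basis of $C^{\langle t\rangle}$ relative to an information set, multiply each basis vector by a codeword of $C$ supplied by Lemma~\ref{coordonnee} to exhibit $k$ independent elements of $C^{\langle t+1\rangle}$, and handle the distance inequality by multiplying a minimum-weight codeword of $C^{\langle t\rangle}$ by a suitable element of $C$. The only differences are cosmetic (your explicit treatment of $t=0$ versus the paper's ``by convention'', and your aside about the failed single-codeword shortcut, which is not actually needed).
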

\begin{proof}
For $t=0$ these inequalities hold by convention, so we suppose $t\geq1$.
Let $k_t=\dimt(C)$,
and let $S\subset\{1,\dots,n\}$ be an information set of coordinates
for $C^{\langle t\rangle}$.
Without loss of generality we can suppose $S=\{1,\dots,k_t\}$.
Let $G_t$ be the generating matrix of $C^{\langle t\rangle}$
put in systematic form with respect to $S$.
If $c$ is the $i$-th line of $G_t$,
then $c\in C^{\langle t\rangle}$ has a $1$ at coordinate $i$
and is zero over $S\moins\{i\}$.
By Lemma~\ref{coordonnee} we can find $c'\in C$ that is non-zero
at $i$, hence $c*c'\in C^{\langle t+1\rangle}$ is non-zero at $i$
and zero over $S\moins\{i\}$.
Letting $i$ vary we see that $C^{\langle t+1\rangle}$ has full rank
over $S$, hence $\dim C^{\langle t+1\rangle}\geq k_t$. This is
the first inequality.

Now let $d_t=\dtmin(C)$ and let $c\in C^{\langle t\rangle}$ be a codeword
of weight $d_t$. Let $i$ be a non-zero coordinate of $c$, so by
Lemma~\ref{coordonnee} we can find $c'\in C$ that is non-zero
at $i$. Then $c*c'\in C^{\langle t+1\rangle}$ is non-zero at $i$,
so it is not the zero codeword,
and its support is a subset of the support of $c$,
hence $\dmin(C^{\langle t+1\rangle})\leq d_t$. This is
the second inequality.
\end{proof}

\begin{corollary}
\label{sympa}
Let $n\geq k\geq 1$ and $s\geq0$ be integers,
and let $N=(s+1)(2s+1)n$.
Let also $\phi:\Fqrs\longto(\Fq)^{(s+1)(2s+1)}$
be the $\Fq$-linear map defined earlier.
Then, for any $\Fqrs$-linear $[n,k]$ code $C$,
the ``concatenated'' code $\phi(C)$ is a $\Fq$-linear code
of length $N$, and we have:
\begin{enumerate}[(i)]
\item $\quad \dim\phi(C)=(2s+1)\dim C$
\item $\quad \ddmin(\phi(C))\geq\dsmin(C)$
\item $\quad \rate(\phi(C))=\frac{1}{s+1}\rate(C)$
\item $\quad \ddrel(\phi(C))\geq\frac{1}{(s+1)(2s+1)}\dsrel(C)$
\end{enumerate}
where, in the left, parameters (and coordinatewise product) are meant
over $\Fq$, and in the right, they are over $\Fqrs$.
\end{corollary}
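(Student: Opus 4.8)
The plan is to handle the four assertions largely separately: (i) and (iii) will come from a one-line injectivity observation about $\phi$ plus a dimension count; assertion (ii) — the only one carrying real content — will be deduced from Proposition~\ref{inegalite_dmin} together with the inclusions $\langle m_j(C,C)\rangle\subseteq C^{\langle 1+q^j\rangle}$; and (iv) will then be a routine normalization of (ii) by the length $N$.

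First I would prove (i). The map $\phi=(\phi_1,\dots,\phi_{(s+1)(2s+1)})$ is $\Fq$-linear, and its coordinate forms include in particular $t_{\gamma_1},\dots,t_{\gamma_r}$, which by Lemma~\ref{dualite_trace} already form a basis of $(\Fqrs)^\vee$; hence $\phi$ is injective. Applying $\phi$ blockwise therefore gives an injective $\Fq$-linear map $C\longto\phi(C)$, so $\phi(C)$ is an $\Fq$-linear subspace of $((\Fq)^{(s+1)(2s+1)})^n=(\Fq)^N$ with $\dim\phi(C)=\dim_{\Fq}C=(2s+1)\dim C$, the last equality because $[\Fqrs:\Fq]=2s+1$. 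This is (i), and dividing by $N=(s+1)(2s+1)n$ yields $\rate(\phi(C))=\frac{(2s+1)\dim C}{(s+1)(2s+1)n}=\frac{1}{s+1}\rate(C)$, which is (iii).

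The heart of the matter is (ii). I would first establish that, with distances and products taken over $\Fqrs$,
\beq
\langle m_j(C,C)\rangle\ \subseteq\ C^{\langle 1+q^j\rangle}\qquad\text{for }0\leq j\leq s.
\eeq
For $j=0$ this is just the identity $\langle m_0(C,C)\rangle=C^{\langle 2\rangle}$ noted earlier. For $j\geq1$ one uses $m_j(x,y)=xy^{q^j}+x^{q^j}y$, together with the elementary but essential remark that $(c'_\ell)^{q^j}$ is literally the product of $q^j$ copies of the coordinate $c'_\ell$ in $\Fqrs$; hence $(c_\ell(c'_\ell)^{q^j})_{1\leq\ell\leq n}$ is the coordinatewise product of the $(1+q^j)$-tuple $(c,c',\dots,c')$ of codewords of $C$ (one $c$ and $q^j$ copies of $c'$), so it lies in $C^{\langle 1+q^j\rangle}$; symmetrically for $(c_\ell^{q^j}c'_\ell)_\ell$. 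Each generator $(m_j(c_\ell,c'_\ell))_\ell$ of $\langle m_j(C,C)\rangle$ is thus a sum of two elements of the $\Fqrs$-linear code $C^{\langle 1+q^j\rangle}$, which proves the inclusion. Since weights are measured over $\Fqrs$ on both sides, the inclusion forces $\dmin(\langle m_j(C,C)\rangle)\geq\dmin(C^{\langle 1+q^j\rangle})$, and then $\dmin(C^{\langle 1+q^j\rangle})\geq\dmin(C^{\langle 1+q^s\rangle})=\dsmin(C)$ by iterating the second inequality of Proposition~\ref{monotonie} from power $1+q^j$ up to $1+q^s$ (note $1+q^j\leq1+q^s$). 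Combining this with Proposition~\ref{inegalite_dmin},
\beq
\ddmin(\phi(C))=\dmin(\phi(C)^{\langle 2\rangle})\ \geq\ \min_{0\leq j\leq s}\dmin(\langle m_j(C,C)\rangle)\ \geq\ \dsmin(C),
\eeq
which is (ii). Then (iv) follows by dividing (ii) by $N$ and recalling that $\phi(C)^{\langle 2\rangle}$ has length $N$ while $\dsrel(C)=\dsmin(C)/n$, giving $\ddrel(\phi(C))\geq\dsmin(C)/N=\frac{1}{(s+1)(2s+1)}\dsrel(C)$.

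I do not expect a genuine obstacle here, since the technical weight of the argument sits in Proposition~\ref{inegalite_dmin}, which we may invoke. The points that require a little care are: reading $(c'_\ell)^{q^j}$ as an honest $q^j$-fold self-product so that it really contributes to $C^{\langle 1+q^j\rangle}$ and not to some Frobenius twist of $C$; noting that a mere set-theoretic inclusion of the $\Fq$-subspace $\langle m_j(C,C)\rangle$ into the $\Fqrs$-subspace $C^{\langle 1+q^j\rangle}$ already suffices for the minimum-distance comparison, as all distances involved are taken over $\Fqrs$; and keeping scrupulous track of which alphabet ($\Fq$ or $\Fqrs$) each of the parameters $\dim$, $\dmin$, $\rate$, $\drel$ refers to on each side of the stated (in)equalities.
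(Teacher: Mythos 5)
Your proof is correct and follows essentially the same path as the paper's own argument: injectivity of $\phi$ from the fact that its first $2s+1$ coordinate forms $t_{\gamma_i}$ already give a basis of the dual (yielding (i) and (iii)), the inclusion $\langle m_j(C,C)\rangle\subseteq C^{\langle 1+q^j\rangle}$ combined with Propositions~\ref{inegalite_dmin} and~\ref{monotonie} for (ii), and normalization by $N$ for (iv). You have simply spelled out the details the paper leaves implicit (notably the expansion of $x^{q^j}$ as a $q^j$-fold self-product to justify the inclusion, and the explicit remark that a set-theoretic inclusion of an $\Fq$-subspace into an $\Fqrs$-linear code suffices for the minimum-distance comparison).
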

\begin{proof}
Remark that $\phi$ is injective, since it was constructed by extending
a basis $t_{\gamma_1},\dots,t_{\gamma_r}$ of $(\Fqr)^\vee$, where $r=2s+1$.
This implies that $\phi(C)$ has dimension $rk$, from which (i) and (iii)
follow.

On the other hand, since $m_0(x,y)=xy$ and $m_j(x,y)=xy^{q^j}+x^{q^j}y$
for $j\geq1$, we find
\beq
\langle m_j(C,C)\rangle\subset C^{\langle 1+q^j\rangle}
\eeq
for all $j\geq0$.
In this inclusion, the right-hand side is a $\Fqrs$-linear code,
while in general the left-hand side is only a $\Fq$-linear subspace.
Nevertheless this implies
\beq
\dmin(\langle m_j(C,C)\rangle)\geq\djmin(C)
\eeq
and together with Propositions~\ref{inegalite_dmin} and~\ref{monotonie},
this gives (ii), and then (iv).
\end{proof}

\section{Algebraic-geometry codes}
\label{AG}

Let $K$ be a finite field.
If $X$ is a (smooth, projective, absolutely irreducible) curve over $K$,
we define a divisor $D$ on $X$ as a formal sum of (closed) points of $X$,
to which one associates the $K$-vector space $L(D)$,
of dimension $l(D)$, made of the functions $f$ on $X$ having
poles at most $D$ (where a pole of negative order means a zero
of opposite order). We also define a degree function on the group
of divisors by extending by linearity the degree function of points.
It is then known:
\begin{itemize}
\item $\; l(D)=0\;$ if $\deg(D)<0$
\item $\; l(D)\geq\deg(D)+1-g\quad$ (Riemann's inequality)
\end{itemize}
where $g$ is the genus of $X$
(and Riemann's inequality can now be seen as part of the
subsequent Riemann-Roch theorem).

If $G=P_1+\cdots+P_n$ is a divisor that is the sum of $n$ distinct
degree~$1$ points of $X$,
then, provided $D$ and $G$ have
disjoint support, we can define an evaluation map
\beq
\begin{array}{cccc}
\ev_{D,G}: & L(D) & \longto & K^n\\
& f & \mapsto & (f(P_1),\dots,f(P_n))
\end{array}
\eeq
and an evaluation code
\beq
C(D,G)\subset K^n
\eeq
as the image of this $K$-linear map $\ev_{D,G}$.
Then, from the preceding properties of $l(D)$ we deduce:
\begin{lemma}[Goppa]
\label{parametres_AG}
Suppose $g\leq\deg(D)<n$.
Then
\beq
\dim C(D,G)=l(D)\geq\deg(D)+1-g
\eeq
and
\beq
\dmin(C(D,G))\geq n-\deg(D).
\eeq
\end{lemma}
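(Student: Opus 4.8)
The plan is to deduce both bounds directly from the two stated properties of $l(D)$, exactly in the spirit of Goppa's classical construction. First I would establish the dimension statement. The code $C(D,G)$ is by definition the image of the linear map $\ev_{D,G}:L(D)\longto K^n$, so $\dim C(D,G)=l(D)-\dim\ker(\ev_{D,G})$. A function $f\in L(D)$ lies in the kernel precisely when $f$ vanishes at all $n$ points $P_1,\dots,P_n$, \ie when $f\in L(D-G)$. Now $\deg(D-G)=\deg(D)-n<0$ by the hypothesis $\deg(D)<n$, so the first stated property gives $l(D-G)=0$, hence $\ev_{D,G}$ is injective and $\dim C(D,G)=l(D)$. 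Combining this with Riemann's inequality $l(D)\geq\deg(D)+1-g$ (applicable, and nonnegative, since $\deg(D)\geq g$) yields the claimed lower bound on the dimension.

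Next I would prove the minimum distance bound. Let $c\in C(D,G)$ be a nonzero codeword; write $c=\ev_{D,G}(f)$ for some $f\in L(D)$, necessarily $f\neq0$ since we just showed $\ev_{D,G}$ is injective. Suppose $c$ has weight $w$, meaning $f$ vanishes at exactly $n-w$ of the points $P_i$; after relabeling say these are $P_{w+1},\dots,P_n$. Then $f$ lies in $L(D-P_{w+1}-\cdots-P_n)$, a space of dimension $l(D-P_{w+1}-\cdots-P_n)$. Since $f\neq0$, this dimension is positive, so by the first stated property (the contrapositive of $l=0$ when $\deg<0$) we must have $\deg(D-P_{w+1}-\cdots-P_n)\geq0$, \ie $\deg(D)-(n-w)\geq0$, which rearranges to $w\geq n-\deg(D)$. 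As this holds for every nonzero codeword, $\dmin(C(D,G))\geq n-\deg(D)$.

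There is essentially no hard step here: the argument is a routine application of the two bullet-point properties of $l(D)$, with the only things to check being that the relevant divisors $D-G$ and $D-P_{w+1}-\cdots-P_n$ have the degrees claimed and that the hypotheses $g\leq\deg(D)<n$ are used exactly where needed (the upper bound $\deg(D)<n$ forces injectivity of $\ev_{D,G}$, and the lower bound $\deg(D)\geq g$ makes the Riemann estimate give a nonnegative — indeed useful — dimension). If anything deserves care it is the bookkeeping that the points $P_i$ are degree-$1$ and distinct, so that "$f$ vanishes at $P_i$" contributes exactly one to the degree of the divisor being subtracted and the supports stay disjoint from that of $D$; but this is immediate from the setup of $G=P_1+\cdots+P_n$. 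So I expect the proof to be short, with no genuine obstacle.
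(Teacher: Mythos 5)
Your proof is correct and is precisely the standard Goppa argument that the paper implicitly invokes: the paper presents this lemma without a written-out proof, merely stating that it follows "from the preceding properties of $l(D)$," and your argument is exactly the routine deduction intended — injectivity of $\ev_{D,G}$ via $\deg(D-G)<0$, Riemann's inequality for the dimension bound, and the degree count on $D-P_{i_1}-\cdots-P_{i_{n-w}}$ for the minimum distance.
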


Evaluation codes also behave well with regard to our intersection span
operations:

\begin{lemma}
\label{puissances_AG}
For any integer $t\geq0$ we have
\beq
C(D,G)^{\langle t\rangle}\subset C(tD,G).
\eeq
\end{lemma}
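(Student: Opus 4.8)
The plan is to prove the inclusion $C(D,G)^{\langle t\rangle}\subset C(tD,G)$ by unwinding what each side means and using the basic fact that multiplying functions adds their pole divisors. First I would handle the trivial cases: for $t=0$, the left side is by convention the repetition code $C(D,G)^{\langle 0\rangle}$, which is the span of the all-ones vector, and this equals the image of the constant function $1\in L(0)$ under $\ev_{0,G}$; since $0=0\cdot D$, this is exactly $C(0\cdot D,G)$. (One should perhaps note here that $0\cdot D$ has degree $0\geq g$ only if $g=0$, but the statement does not require the hypotheses of Lemma~\ref{parametres_AG}; the inclusion of codes holds regardless, and indeed $\ev_{0,G}$ makes sense since $0$ and $G$ have disjoint support.) For $t=1$ the inclusion is an equality by definition.

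The heart of the matter is the inductive step. Suppose $C(D,G)^{\langle t\rangle}\subset C(tD,G)$; I want to deduce $C(D,G)^{\langle t+1\rangle}\subset C((t+1)D,G)$. By definition $C(D,G)^{\langle t+1\rangle}=\langle C(D,G)^{\langle t\rangle}*C(D,G)\rangle$, so it is spanned by vectors of the form $x*y$ with $x\in C(D,G)^{\langle t\rangle}$ and $y\in C(D,G)$. By the induction hypothesis, $x=\ev_{tD,G}(g)$ for some $g\in L(tD)$, and $y=\ev_{D,G}(f)$ for some $f\in L(D)$. Then componentwise $x*y=(g(P_1)f(P_1),\dots,g(P_n)f(P_n))=\ev_{(t+1)D,G}(gf)$, \emph{provided} $gf\in L((t+1)D)$. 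But this is immediate from the definition of $L(\cdot)$: the divisor of poles of a product $gf$ is bounded by the sum of the divisors of poles, so $\dv_\infty(gf)\leq \dv_\infty(g)+\dv_\infty(f)\leq tD+D=(t+1)D$, whence $gf\in L((t+1)D)$. Thus every generator $x*y$ of $C(D,G)^{\langle t+1\rangle}$ lies in $C((t+1)D,G)$, and since $C((t+1)D,G)$ is a linear subspace it contains the whole span. This closes the induction.

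I do not expect any real obstacle here — the lemma is essentially a formal consequence of the multiplicativity of pole orders, which is why the analogous assertion was already invoked informally in Section~\ref{informal}. The only points requiring a word of care are bookkeeping ones: making sure the evaluation maps $\ev_{tD,G}$ are all well-defined (which needs $tD$ and $G$ to have disjoint support — this follows from $D$ and $G$ having disjoint support, since $tD$ has the same support as $D$ when $t\geq1$), and making sure the $t=0$ base case is phrased consistently with the convention $C^{\langle 0\rangle}=$ repetition code adopted in Section~\ref{concatenation_et_bilinearite}. A clean way to organize the write-up is to first record the key sub-fact as a one-line observation — if $g\in L(aD)$ and $f\in L(bD)$ then $gf\in L((a+b)D)$ — and then run the induction on $t$ using it, which makes the whole proof three or four lines long.
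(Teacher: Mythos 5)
Your proof is correct and follows essentially the same route as the paper: induction on $t$, with the inductive step reduced to the observation that $f\in L(D)$ and $g\in L(tD)$ imply $fg\in L((t+1)D)$, so that $\ev(f)*\ev(g)=\ev(fg)\in C((t+1)D,G)$. The only cosmetic difference is that the paper phrases the inductive step as the slightly more general inclusion $\langle C(D,G)*C(D',G)\rangle\subset C(D+D',G)$ for arbitrary divisors $D,D'$ disjoint from $G$, whereas you specialize directly to $D'=tD$; the content is the same.
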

\begin{proof}
This is true for $t=0$, so by induction it suffices to show
$\langle C(tD,G)*C(D,G)\rangle\subset C((t+1)D,G)$,
or more generally,
\beq
\langle C(D,G)*C(D',G)\rangle\subset C(D+D',G)
\eeq
for any divisors $D,D'$ with supports disjoint from $G$.
But for $c\in C(D,G)$ and $c'\in C(D',G)$,
write $c=\ev(f)$ and $c'=\ev(f')$ with $f\in L(D)$ and $f'\in L(D')$,
and then $c*c'=\ev(ff')$ with $ff'\in L(D+D')$,
from which the conclusion follows.
\end{proof}

\begin{proposition}
\label{construction_finie}
Let $q$ be a prime power, and $s\geq0$ an integer.
Let $X$ be a curve over $\F_{q^{2s+1}}$, of genus $g$,
and suppose that $X$ admits a set $\{P_1,\dots,P_n\}$
of degree~$1$ points of cardinality
\beq
n>(1+q^s)g.
\eeq
Let then $G=P_1+\cdots+P_n$. Let also $D$ be a divisor on $X$
of support disjoint from $G$ and whose degree $\deg(D)=m$
satisfies
\beq
g\leq m<\frac{n}{1+q^s}.
\eeq
Finally let
\beq
\phi:\Fqrs\longto(\Fq)^{(s+1)(2s+1)}
\eeq
as in the previous section.
Then the corresponding concatenated code
\beq
C=\phi(C(D,G))\subset(\Fq)^{(s+1)(2s+1)n}
\eeq
has parameters satisfying:
\begin{enumerate}[(i)]
\item $\quad\displaystyle \dim C\geq(2s+1)(m+1-g)$
\item $\quad\displaystyle \ddmin(C)\geq n-(1+q^s)m$
\item $\quad\displaystyle \rate(C)\geq\frac{1}{s+1}\frac{m+1-g}{n}$
\item $\quad\displaystyle \ddrel(C)\geq\frac{1}{(s+1)(2s+1)}\left(1-\frac{(1+q^s)m}{n}\right)$
\end{enumerate}
\end{proposition}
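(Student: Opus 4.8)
The plan is to derive all four bounds by feeding the algebraic-geometry code $C_0 := C(D,G)$ into Corollary~\ref{sympa}, once I have pinned down the parameters of $C_0$ itself and of its $(1+q^s)$-th power. The hypotheses give $g \leq m = \deg(D) < \tfrac{n}{1+q^s} \leq n$, so Lemma~\ref{parametres_AG} applies to $C_0$ and yields $\dim C_0 = l(D) \geq m+1-g \geq 1$ and $\dmin(C_0) \geq n-m$; in particular $C_0$ is an $\Fqrs$-linear $[n,k]$ code with $1 \leq k \leq n$, so Corollary~\ref{sympa} is applicable to it, and immediately part~(i) of that corollary gives $\dim\phi(C_0) = (2s+1)k \geq (2s+1)(m+1-g)$ and part~(iii) gives $\rate(\phi(C_0)) = \tfrac{1}{s+1}\tfrac{k}{n} \geq \tfrac{1}{s+1}\tfrac{m+1-g}{n}$, which are conclusions~(i) and~(iii).

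For conclusions~(ii) and~(iv) I would next bound $\dsmin(C_0)$, the minimum distance of $C_0^{\langle 1+q^s\rangle}$. By Lemma~\ref{puissances_AG} we have $C_0^{\langle 1+q^s\rangle} \subset C((1+q^s)D, G)$, and since enlarging a nonzero linear code can only decrease its minimum distance, $\dsmin(C_0) \geq \dmin(C((1+q^s)D,G))$. To invoke Lemma~\ref{parametres_AG} for $C((1+q^s)D,G)$ I need $g \leq (1+q^s)m < n$: the upper bound is exactly the hypothesis $m < n/(1+q^s)$, and the lower bound follows from $m \geq g \geq 0$ together with $1+q^s \geq 1$. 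Goppa's bound then gives $\dmin(C((1+q^s)D,G)) \geq n-(1+q^s)m$, so $\dsmin(C_0) \geq n-(1+q^s)m$ and hence $\dsrel(C_0) = \dsmin(C_0)/n \geq 1 - (1+q^s)m/n$.

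It remains to combine these with Corollary~\ref{sympa}: part~(ii) gives $\ddmin(\phi(C_0)) \geq \dsmin(C_0) \geq n-(1+q^s)m$, which is~(ii), and part~(iv) gives $\ddrel(\phi(C_0)) \geq \tfrac{1}{(s+1)(2s+1)}\dsrel(C_0) \geq \tfrac{1}{(s+1)(2s+1)}\big(1 - \tfrac{(1+q^s)m}{n}\big)$, which is~(iv). I do not expect any real obstacle: the statement is essentially the concatenation of Lemmas~\ref{parametres_AG} and~\ref{puissances_AG} with Corollary~\ref{sympa}, and the only point requiring a moment's care is verifying the degree inequalities $g \leq (1+q^s)m < n$ that license applying Goppa's bound to the $(1+q^s)$-th power --- it is precisely here that the standing hypothesis $n > (1+q^s)g$ is used, as it is what makes the admissible range $g \leq m < n/(1+q^s)$ for $\deg(D)$ nonempty.
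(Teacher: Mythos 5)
Your proof is correct and follows exactly the paper's route: apply Lemma~\ref{parametres_AG} to $C(D,G)$ and to $C((1+q^s)D,G)$ (the latter licensed by Lemma~\ref{puissances_AG} with $t=1+q^s$), then feed the resulting bounds into the four parts of Corollary~\ref{sympa}. You are somewhat more explicit than the paper in verifying the degree conditions $g\leq(1+q^s)m<n$ needed to invoke Goppa's bound on the power, but the argument is the same.
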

\begin{proof}
Inequalities (i) and (iii) follow
from (i) and (iii) in Corollary~\ref{sympa}
joint with Lemma~\ref{parametres_AG}.

Inequalities (ii) and (iv) follow
from (ii) and (iv) in Corollary~\ref{sympa}
joint with Lemma~\ref{parametres_AG} applied to $C((1+q^s)D,G)$
and Lemma~\ref{puissances_AG} applied with $t=1+q^s$.
\end{proof}

For any prime power $q$, let $N_q(g)$ be the maximal possible number
of degree~$1$ points of a curve of genus $g$ over $\Fq$,
and let
\beq
A(q)=\limsup_{g\to\infty}\frac{N_q(g)}{g}.
\eeq
We will now make use of a recent result
of Garcia-Stichtenoth-Bassa-Beelen \cite{GSBB}, in the following form:

\begin{lemma}
\label{A_grand}
For any prime power $q$, there exists an integer $s$ such that
\beq
A(q^{2s+1})>1+q^s
\eeq
(and in fact this holds as soon as $s$ is large enough).
\end{lemma}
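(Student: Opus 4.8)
The plan is to quote the Garcia-Stichtenoth-Bassa-Beelen theorem \cite{GSBB} in a form that gives a positive lower bound on $A(q^{2s+1})$ growing like a positive fraction of $q^{s}$, and then to observe that for $s$ large this bound overtakes $1+q^{s}$. Concretely, the GSBB result states that for every prime power $\ell$ and every integer $n\geq 1$ one has a lower bound of the shape $A(\ell^{n})\geq c_n\cdot(\ell^{\lfloor n/2\rfloor} - 1)$, or in the closely related refined form $A(\ell^{n})\geq \frac{2(\ell^{\lceil n/2\rceil}-1)}{\,\cdots\,}$; the precise constant is not what matters here. What matters is that, specializing $\ell=q$ and $n=2s+1$ (so $\lfloor n/2\rfloor = s$), this reads $A(q^{2s+1})\geq f(s)$ where $f(s)$ is, up to a multiplicative constant bounded away from $0$, of order $q^{s}$.

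The first step is therefore to extract from \cite{GSBB} the explicit inequality $A(q^{2s+1})\geq \lambda_s\,(q^{s}-1)$ with $\lambda_s\to 1$ (or at least $\liminf_s\lambda_s>\tfrac12$, say) as $s\to\infty$; this is pure bookkeeping with their stated bound and involves no real mathematics. The second step is the elementary comparison: since $\lambda_s(q^{s}-1) > 1+q^{s}$ is equivalent to $(\lambda_s-1)q^{s} > 1+\lambda_s$, and the left side tends to $+\infty$ once $\lambda_s>1$ eventually (or more carefully: $\lambda_s q^{s} - q^{s} - 1 \to +\infty$ because $\lambda_s\to1$ is not by itself enough — one needs the slightly sharper reading that the GSBB bound exceeds $q^{s}$ by a term growing with $s$), the inequality $A(q^{2s+1})>1+q^{s}$ holds for all sufficiently large $s$. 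In fact the cleanest route is to use the form of the GSBB bound that already beats the ``trivial'' $q^{s}-1$ target and rewrite the gap as an explicitly increasing function of $s$.

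The main obstacle is purely expository: one must cite \cite{GSBB} in exactly the right normalization, because the recursive towers there are usually stated with a lower bound on the \emph{limit} $\lim N/g$ along a specific tower rather than on $A$ itself, and with constants that depend on a somewhat delicate case analysis (even versus odd $n$, and small characteristic corrections). Getting the inequality $A(q^{2s+1})>1+q^{s}$ to be true \emph{for a specific small $s$} — which the parenthetical ``as soon as $s$ is large enough'' in the statement, together with the explicit $r=9$ mentioned in Section~\ref{informal}, indicates the author will want — requires plugging numbers into the GSBB formula and checking an arithmetic inequality; this is the only step where care is needed, and it is routine rather than conceptual.
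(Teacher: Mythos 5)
Your plan is in the right spirit — the paper does lean on the Garcia--Stichtenoth--Bassa--Beelen bound exactly as you propose — but the quantitative part of your argument has a real gap. You state the bound as $A(q^{2s+1})\geq\lambda_s(q^s-1)$ with $\lambda_s\to 1$ (or $\liminf\lambda_s>\tfrac12$), and both are insufficient: if $\lambda_s\leq 1$ then $\lambda_s(q^s-1)<1+q^s$, full stop. You notice this yourself mid-paragraph but patch it only with the vague phrase ``one needs the slightly sharper reading that the GSBB bound exceeds $q^s$ by a term growing with $s$'' without saying why such a sharper reading is available. The correct statement, and the whole point, is that the GSBB constant is bounded away from $1$ from above, not merely away from $0$. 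Concretely, writing $q=p^{2t+1}$ with $p$ prime (so that $q^{2s+1}=p^{(2s+1)(2t+1)}$ is an \emph{odd} prime power, which is the setting of GSBB Theorem~1.1), the bound reads
\beq
A(q^{2s+1})\;\geq\;\frac{2\bigl(p^{2st+s+t+1}-1\bigr)}{p+1+\epsilon_s}
\;\approx\;\frac{2p^{t+1}}{p+1}\,q^{s},
\eeq
and $\frac{2p^{t+1}}{p+1}>1$ for every prime $p$ and $t\geq 0$. That strict inequality is what makes the comparison with $1+q^s$ go through for large $s$; a ratio tending to $1$ would not.

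There is a second, smaller omission: the reduction to a prime base forces $q$ to be an odd power of $p$, i.e.\ a non-square. The paper splits into two cases accordingly: if $q$ is a square it invokes the classical Ihara bound $A(q^{2s+1})\geq (q^{2s+1})^{1/2}-1=q^{s+1/2}-1$ (which already beats $1+q^s$ for large $s$ since $\sqrt{q}>1$), and only when $q$ is a non-square does it call GSBB after writing $q=p^{2t+1}$. Your proposal applies GSBB directly with $\ell=q$ and $n=2s+1$, but the theorem you want is stated over a prime base; without the factorization $q=p^{2t+1}$ you are not actually quoting a result from \cite{GSBB}. So the missing ideas are precisely: (a) the factor in front of $q^s$ must be strictly greater than $1$, and this is what writing things over the prime field buys you; and (b) the square/non-square dichotomy, with Ihara covering the square case.
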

\begin{proof}
If $q$ is a square, one knows from \cite{Ihara}
that $A(q^{2s+1})\geq (q^{2s+1})^{1/2}-1>1+q^s$ as soon as $s$ is large enough.
So suppose $q$ is not a square, say $q=p^{2t+1}$ with $p$ prime.
Then Theorem~1.1 of \cite{GSBB} gives
\beq
A(q^{2s+1})=A(p^{4st+2s+2t+1})\geq\frac{2(p^{2st+s+t+1}-1)}{p+1+\epsilon_s}
\eeq
with $\epsilon_s\to0$ as $s\to\infty$,
so, for $s$ large enough, this is greater
than
\beq
1+q^s=1+p^{2st+s}
\eeq
as claimed.
\end{proof}

From this we can finally prove our main theorem.

\begin{theorem}
\label{construction_asymptotique}
Let $q$ be a prime power, and let $s$ be as given by Lemma~\ref{A_grand}.
Then, for any real number $\mu$ with
\beq
1<\mu<\frac{A(q^{2s+1})}{1+q^s}
\eeq
there exists a family of linear codes $C_i$ over $\Fq$, of length
going to infinity, satisfying
\beq
\liminf_{i\to\infty}\,\rate(C_i)\geq\frac{1}{s+1}\frac{\mu-1}{A(q^{2s+1})}
\eeq
and
\beq
\liminf_{i\to\infty}\,\ddrel(C_i)\geq\frac{1}{(s+1)(2s+1)}\left(1-\frac{(1+q^s)\mu}{A(q^{2s+1})}\right).
\eeq
\end{theorem}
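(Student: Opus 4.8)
\begin{proof}
The plan is to feed the finite construction of Proposition~\ref{construction_finie} with a sequence of curves realizing the Ihara constant, and then pass to the limit. Throughout, write $A=A(q^{2s+1})$ and $Q=1+q^s$ for brevity. By Lemma~\ref{A_grand} we have $A>Q$, and we are given a real number $\mu$ with $1<\mu<A/Q$.

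By definition of $A$ as a $\limsup$, there is a strictly increasing sequence of integers $g_i\to\infty$ with $N_{q^{2s+1}}(g_i)/g_i\to A$. For each $i$, fix a (smooth, projective, absolutely irreducible) curve $X_i$ over $\Fqrs$ of genus $g_i$ together with a set of $n_i$ degree-$1$ points, where $n_i=N_{q^{2s+1}}(g_i)-1$, and one further degree-$1$ point $P_i$ distinct from these; such choices exist since $N_{q^{2s+1}}(g_i)$ is attained and, for $i$ large, exceeds $Qg_i+1\geq 3$. Set $G_i=P_1+\cdots+P_{n_i}$ and $m_i=\lfloor\mu g_i\rfloor$ and $D_i=m_i P_i$, a divisor of degree $m_i$ with support disjoint from $G_i$ (any effective divisor of degree $m_i$ supported away from $G_i$ would serve equally well). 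Note that $n_i/g_i\to A$ and $m_i/g_i\to\mu$.

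We now check, for all $i$ large enough, the hypotheses of Proposition~\ref{construction_finie}, namely $n_i>Qg_i$ and $g_i\leq m_i<n_i/Q$. The inequality $n_i>Qg_i$ holds eventually because $n_i/g_i\to A>Q$. Since $m_i/g_i\to\mu>1$ we have $m_i\geq g_i$ for $i$ large. Finally, $\frac{1}{Q}\frac{n_i}{g_i}-\frac{m_i}{g_i}\to\frac{A}{Q}-\mu>0$, so $m_i<n_i/Q$ for $i$ large. Discarding the finitely many remaining indices, Proposition~\ref{construction_finie} applies and yields, for each $i$, a linear code $C_i=\phi(C(D_i,G_i))$ over $\Fq$ of length $(s+1)(2s+1)n_i$, which tends to infinity.

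It remains to take limits in (iii) and (iv) of Proposition~\ref{construction_finie}. Using $\frac{m_i-g_i}{n_i}=\frac{m_i-g_i}{g_i}\cdot\frac{g_i}{n_i}\to\frac{\mu-1}{A}$ and $\frac{m_i}{n_i}=\frac{m_i}{g_i}\cdot\frac{g_i}{n_i}\to\frac{\mu}{A}$, we obtain
\beq
\liminf_{i\to\infty}\,\rate(C_i)\geq\liminf_{i\to\infty}\frac{1}{s+1}\cdot\frac{m_i+1-g_i}{n_i}=\frac{1}{s+1}\cdot\frac{\mu-1}{A(q^{2s+1})}
\eeq
and
\beq
\liminf_{i\to\infty}\,\ddrel(C_i)\geq\liminf_{i\to\infty}\frac{1}{(s+1)(2s+1)}\left(1-\frac{(1+q^s)m_i}{n_i}\right)=\frac{1}{(s+1)(2s+1)}\left(1-\frac{(1+q^s)\mu}{A(q^{2s+1})}\right),
\eeq
which are exactly the announced bounds. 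Both limits are strictly positive precisely because $1<\mu<A/Q$, so the family $(C_i)$ is asymptotically good and so is the family of its self-intersection spans.
\end{proof}

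\begin{remark}
The argument above is mostly bookkeeping; all the substance lies in the earlier sections, namely the bijectivity of the reconstruction map $\theta$ obtained from the symmetric-square description of $\Fqrs$ in Section~\ref{bilinearite_et_extensions}, the concatenation estimate of Corollary~\ref{sympa} (which, crucially, controls $\ddrel(\phi(C))$ in terms of the higher powers of $C$ up to order $1+q^s$), and the Garcia--Stichtenoth--Bassa--Beelen bound recorded in Lemma~\ref{A_grand}. The only genuinely delicate point in the present proof is the interplay of the three ratios $g_i/n_i\to 1/A$, $m_i/g_i\to\mu$ and the divisibility/size constraints $g_i\leq m_i<n_i/Q$, together with the condition $n_i>Qg_i$; this is exactly why the admissible range for $\mu$ is $1<\mu<A(q^{2s+1})/(1+q^s)$. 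Optimizing $\mu$ (and the choice of $s$) inside this range, using the explicit estimates of Lemma~\ref{A_grand}, produces the numerical values for $\epsilon,\epsilon'$ quoted in Section~\ref{Intro}.
\end{remark}
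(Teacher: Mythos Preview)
Your proof is correct and follows essentially the same route as the paper: pick curves $X_i$ over $\Fqrs$ with $N(X_i)/g_i\to A(q^{2s+1})$, set $n_i=N(X_i)-1$, $m_i=\lfloor\mu g_i\rfloor$, $D_i=m_iP_i$ for a reserved rational point, apply Proposition~\ref{construction_finie}, and pass to the limit. You are in fact more careful than the paper in explicitly verifying the hypotheses $n_i>(1+q^s)g_i$ and $g_i\leq m_i<n_i/(1+q^s)$ for large $i$; the only cosmetic blemish is that the symbol $P_i$ is overloaded (once for the reserved point on $X_i$, once implicitly in the sum $G_i=P_1+\cdots+P_{n_i}$), but the intent is unambiguous.
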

\begin{proof}
For any curve $X$ over $\Fqrs$, denote by $N(X)$ the number
of its degree~$1$ points.
Let $X_i$ be a sequence of curves of genus $g_i$
going to infinity, and such that $\lim_i\frac{N(X_i)}{g_i}=A(q^{2s+1})$.
Also choose a sequence of integers $m_i$ such that $\lim_i\frac{m_i}{g_i}=\mu$.

Now, given $i$ large enough, write $n_i=N(X_i)-1$,
let $P_{i,0},P_{i,1},\dots,P_{i,n_i}$ be the degree~$1$ points of $X_i$,
and let $D_i=m_iP_{i,0}$.
Then Proposition~\ref{construction_finie}
gives a code $C_i$ over $\Fq$ of length $(s+1)(2s+1)n_i$
with $\rate(C_i)\geq\frac{1}{s+1}\frac{m_i+1-g_i}{n_i}$
and $\ddrel(C_i)\geq\frac{1}{(s+1)(2s+1)}\left(1-\frac{(1+q^s)m_i}{n_i}\right)$,
and the conclusion follows.
\end{proof}

Remark that the proof of Theorem~\ref{construction_asymptotique}
is constructive, and works also for a possibly non-optimal sequence
of curves over $\Fqrs$, by which we mean,
curves satisfying $\liminf_i\frac{N(X_i)}{g_i}\geq A'$
for a certain $A'\leq A(q^{2s+1})$,
provided still $A'>1+q^s$ and one replaces all occurences of $A(q^{2s+1})$
in the theorem with $A'$.

For example \cite{GSBB} gives an explicit sequence of curves over $\F_{2^9}$
with $\lim_i\frac{N(X_i)}{g_i}\geq A'=465/23\approx 20.217>17=1+2^4$.
Choosing $\mu=186/161$
then gives an explicit
sequence of binary linear codes $C_i$ of length going to infinity
with $\liminf_i\,\rate(C_i)\geq 1/651$
and $\liminf_i\,\ddrel(C_i)\geq 1/1575$.
Of course these are only lower bounds, and it could well be that
these codes actually have much better parameters.

\section{Concluding remarks and open problems}
\label{fin}

Keeping Proposition~\ref{monotonie} in mind,
perhaps the most general question one can ask about the parameters
of successive powers of codes is the following:
given a prime power $q$, an integer $n$,
and two sequences $k_1\leq\ k_2\leq k_3\leq\dots$
and $d_1\geq d_2\geq d_3\geq\dots$, does there exist
a linear code $C\subset(\Fq)^n$
with $\dimt(C)=k_t$ and $\dtmin(C)=d_t$ for all $t$?
In fact, already of interest is the study of the function
\beq
a_q^{\langle t\rangle}(n,d)=\max\{k\geq0\,|\,\exists C\subset(\Fq)^n,\,\dim(C)=k,\,\dtmin(C)\geq d\}.
\eeq
Proposition~\ref{monotonie} gives $a_q^{\langle t\rangle}(n,d)\geq a_q^{\langle t+1\rangle}(n,d)$,
and Corollary~\ref{sympa} gives
\beq
a_q^{\langle 2\rangle}((s+1)(2s+1)n,d)\geq(2s+1)a_{q^{2s+1}}^{\langle 1+q^s\rangle}(n,d)
\eeq
for all $s\geq0$.

But besides parameters, one can ask for other characterizations
of codes that are powers. Consider for example the ``square root'' problem:
given a linear code $C\subset(\Fq)^n$, can one decide if there exists
a code $C_0$ such that $C=C_0^{\langle 2\rangle}$, and if so, how many
are there? can one construct one such square root,
or all of them, effectively?

An obvious counting argument shows that, on average, a code taken
randomly in the set of all codes of given length admits one square root.
However the actual distribution of squares within the set of codes
of given parameters might be quite inhomogeneous, and would be interesting
to study. For example, all binary codes of length $3$, except two of
them, are their own unique square root. The two exceptions are:
the $[3,2,2]$ parity code is not a square; the trivial $[3,3,1]$ code
admits two square roots, namely itself and the $[3,2,2]$ code.

\medskip

Now we turn to asymptotic properties.
Define
\beq
\alpha_q^{\langle t\rangle}(\delta)=\limsup_{n\to\infty}\frac{a_q^{\langle t\rangle}(n,\lfloor\delta n\rfloor)}{n},
\eeq
\beq
\delta_q(t)=\sup\{\delta\geq0\,|\,\alpha_q^{\langle t\rangle}(\delta)>0\},
\eeq
and
\beq
\tau(q)=\sup\{t\in\N\,|\,\delta_q(t)>0\}.
\eeq
That is, $\tau(q)$ is the supremum value (possibly $+\infty$)
of $t$ such that there exists an asymptotically good
family of linear codes $C_i$ over $\Fq$
whose $t$-th powers $C_i^{\langle t\rangle}$
also form an asymptotically good family.

From Corollary~\ref{sympa} one finds
\beq
\alpha_q^{\langle 2\rangle}(\delta)\geq\frac{1}{s+1}\alpha_{q^{2s+1}}^{\langle 1+q^s\rangle}((s+1)(2s+1)\delta)
\eeq
and
\beq
\delta_q(2)\geq\frac{1}{(s+1)(2s+1)}\delta_{q^{2s+1}}(1+q^s)
\eeq
for all $s\geq0$.

On the other hand, from Lemma~\ref{parametres_AG}
and Lemma~\ref{puissances_AG} one easily finds
\beq
\alpha_q^{\langle t\rangle}(\delta)\geq\frac{1-\delta}{t}-\frac{1}{A(q)}
\eeq
and
\beq
\delta_q(t)\geq 1-\frac{t}{A(q)}
\eeq
hence
\beq
\tau(q)\geq \lceil A(q)\rceil-1
\eeq
(which is non-trivial only for $q$ large).

Combining these bounds, or equivalently, eliminating $\mu$
from the two estimates in Theorem~\ref{construction_asymptotique},
one gets
\beq
\alpha_q^{\langle 2\rangle}(\delta)\geq\frac{1}{s+1}\left(\frac{1}{1+q^s}-\frac{1}{A(q^{2s+1})}\right)-\frac{2s+1}{1+q^s}\,\delta
\eeq
and
\beq
\delta_q(2)\geq\frac{1}{(s+1)(2s+1)}\left(1-\frac{1+q^s}{A(q^{2s+1})}\right)
\eeq
for all $s\geq0$, and hence, by Lemma~\ref{A_grand},
\beq
\tau(q)\geq 2
\eeq
for all $q$ (which was precisely Theorem~\ref{th1}).

When $q=p$ is prime, these estimates can be made more precise using
the bound
$\frac{1}{A(p^{2s+1})}\leq\frac{1}{2}\left(\frac{1}{p^s-1}+\frac{1}{p^{s+1}-1}\right)$
from \cite{GSBB}.
For $p=2$, the best choice is $s=4$, which gives
\beq
\alpha_2^{\langle 2\rangle}(\delta)\geq\frac{74}{39525}-\frac{9}{17}\,\delta\,\approx\, 0.001872-0.5294\,\delta
\eeq
and
\beq
\delta_2(2)\geq\frac{74}{20925}\approx0.003536.
\eeq
This can be viewed as a quantitative version of the claim $\tau(2)\geq2$
made in the title of this article.
However, in the other direction,
the author doesn't know any upper bound on the $\tau(q)$,
for instance, he doesn't even know whether $\tau(2)$ is finite.

%

\end{document}